\documentclass[11pt]{article}
\usepackage{amsmath,amsthm,amssymb,amsfonts}
\newcommand{\field}[1]{\mathbb{#1}}
\bibliographystyle{alpha}
\newcommand{\remove}[1]{}
\setlength{\topmargin}{0.1in} \setlength{\headheight}{0in}
\setlength{\headsep}{0in} \setlength{\textheight}{8.5in}
\setlength{\topsep}{0.1in} \setlength{\itemsep}{0.0in}
\parskip=0.05in
\setlength{\parindent}{0cm}
\textwidth=6.5in \oddsidemargin=0truecm \evensidemargin=0truecm

\usepackage{longtable}
\usepackage{complexity}

\newtheorem{thm}{Theorem}[section]
\newtheorem{claim}[thm]{Claim}
\newtheorem{lem}[thm]{Lemma}
\newtheorem{define}[thm]{Definition}
\newtheorem{cor}[thm]{Corollary}
\newtheorem{obs}[thm]{Observation}

\newtheorem{fact}[thm]{Fact}
\newtheorem{prop}[thm]{Proposition}
\newtheorem{rmk}[thm]{Remark}




\def\F{{\mathbb{F}}}

\def\Z{{\mathbb{Z}}}

\def\_{\,\,\,\,\,}

\def\spsp{\Sigma\Pi\Sigma\Pi}

\def\eval{\mathsf{Eval}}
\def\dim{\mathsf{Dim}}
\def\lm{\mathsf{Lead\mbox{-}Mon}}
\begin{document}


\mathchardef\mhyphen="2D 

\title{Superpolynomial lower bounds for general homogeneous depth 4 arithmetic circuits}
\author{Mrinal Kumar\thanks{Department of Computer Science, Rutgers University.
Email: \texttt{mrinal.kumar@rutgers.edu}.}\and
Shubhangi Saraf\thanks{Department of Computer Science and Department of Mathematics, Rutgers University.
Email: \texttt{shubhangi.saraf@gmail.com}.}}

\date{}
\maketitle
\abstract{
In this paper, we prove superpolynomial lower bounds for the class of homogeneous depth 4 arithmetic circuits. We give an explicit polynomial in $\VNP$ of degree $n$ in $n^2$ variables such that any homogeneous depth 4 arithmetic circuit computing it must have size $n^{\Omega(\log \log n)}$. 

Our results extend the works of Nisan-Wigderson~\cite{NW95} (which showed superpolynomial lower bounds for homogeneous depth 3 circuits), Gupta-Kamath-Kayal-Saptharishi and Kayal-Saha-Saptharishi~\cite{GKKS12, KSS13} (which showed superpolynomial lower bounds for homogeneous depth 4 circuits with bounded bottom fan-in), Kumar-Saraf~\cite{KS-formula} (which showed superpolynomial lower bounds for homogeneous depth 4 circuits with bounded top fan-in) and Raz-Yehudayoff and Fournier-Limaye-Malod-Srinivasan~\cite{RY08b, FLMS13} (which showed superpolynomial lower bounds for multilinear depth 4 circuits). Several of these results in fact showed exponential lower bounds. 

The main ingredient in our proof is a new complexity measure of {\it bounded support} shifted partial derivatives. This measure allows us to prove exponential lower bounds for homogeneous depth 4 circuits where all the monomials computed at the bottom layer have {\it bounded support} (but possibly unbounded degree/fan-in), strengthening the results of Gupta et al and Kayal et al~\cite{GKKS12, KSS13}. This new lower bound combined with a careful ``random restriction" procedure (that transforms general depth 4 homogeneous circuits to depth 4 circuits with bounded support) gives us our final result. 
}



\section{Introduction}

Proving lower bounds for explicit polynomials is one of the most important open problems in the area of algebraic complexity theory.  Valiant~\cite{Valiant79} defined the classes $\VP$ and $\VNP$ as the algebraic analog of the classes $\P$ and $\NP$, and showed that proving superpolynomial lower bounds for the Permanent would suffice in separating $\VP$ from $\VNP$.  Despite the amount of attention received by the problem, we still do not know any superpolynomial (or even {\it quadratic}) lower bounds for general arithmetic circuits. This absence of progress on the general problem has led to a lot of attention on the problem of proving lower bounds for restricted classes of arithmetic circuits. The hope is that an understanding of restricted classes might lead to a better understanding of the nature of the more general problem, and the techniques developed in this process could possibly be adapted to understand general circuits better. Among the many restricted classes of arithmetic circuits that have been studied with this motivation, {\it bounded depth} circuits have received a lot of attention. 

In a striking result, Valiant et al~\cite{VSBR83} showed that any $n$ variate polynomial of degree $\text{poly}(n)$ which can be computed by a polynomial sized arithmetic circuit of arbitrary depth can also be computed by an arithmetic circuit of depth $O(\log^2 n)$ and size poly$(n)$. Hence, proving superpolynomial lower bounds for circuits of depth $\log^2 n$ is as hard as proving lower bounds for general arithmetic circuits. In a series of recent works, Agrawal-Vinay~\cite{AV08}, Koiran~\cite{koiran} and Tavenas~\cite{Tavenas13} showed that the depth reduction techniques of Valiant et al~\cite{VSBR83} can in fact be extended much further. They essentially showed that in order to prove superpolynomial lower bounds for general arithmetic circuits, it suffices to prove strong enough lower bounds for just {\it homogeneous depth 4} circuits. In particular, to separate $\VNP$ from $\VP$, it would suffice to focus our attention on proving strong enough lower bounds for homogeneous depth 4 circuits. 

The first superpolynomial lower bounds for homogeneous circuits of depth 3 were proved by Nisan and Wigderson~\cite{NW95}. Their main technical tool was the use of the {\it dimension of partial derivatives} of the underlying polynomials as a complexity measure. For many years thereafter, progress on the question of improved lower bounds stalled. In a recent breakthrough result on this problem, Gupta, Kamath, Kayal and Saptharishi~\cite{GKKS12} proved the first superpolynomial ($2^{\Omega(\sqrt n)}$) lower bounds for homogeneous depth 4 circuits when the fan-in of the product gates at the bottom level is bounded (by $\sqrt n$). This result was all the more remarkable in light of the results by Koiran~\cite{koiran} and Tavenas~\cite{Tavenas13} which showed that $2^{\omega(\sqrt n\log n)}$ lower bounds for this model would suffice in separating $\VP$ from $\VNP$. The results of Gupta et al were further improved upon by Kayal Saha and Sapthrashi~\cite{KSS13} who showed $2^{\Omega(\sqrt n\log n)}$ lower bounds for the model of homogeneous depth 4 circuits when the fan-in of the product gates at the bottom level is bounded (by $\sqrt n$). Thus even a slight asymptotic improvement in the exponent of either of these bounds would imply lower bounds for general arithmetic circuits! 

The main tool used in both the papers~\cite{GKKS12} and~\cite{KSS13} was the notion of the dimension of {\it shifted partial derivatives} as a complexity measure, a refinement of the Nisan-Wigderson  complexity measure of dimension of partial derivatives. 

In spite of all this exciting progress on homogeneous depth 4 circuits with bounded bottom fanin (which suggests that possibly we might be within reach of  lower bounds for much more general classes of circuits) these results give almost no non trivial (not even super linear) lower bounds for general homogeneous depth 4 circuits (with no bound on bottom fanin). Indeed the only lower bounds we know for general homogeneous depth 4 circuits are the slightly superlinear  lower bounds by Raz using the notion of elusive functions~\cite{Raz10b}. 

Thus nontrivial lower bounds for the  class of general depth 4 homogeneous circuits seems like a natural and basic question left open by these works, and strong enough lower bounds for this model seems to be an important barrier to overcome before proving lower bounds for more general classes of circuits. 

 In this direction, building upon the work in~\cite{GKKS12, KSS13}, Kumar and Saraf~\cite{KS-depth4, KS-formula} proved superpolynomial lower bounds for depth 4 circuits with unbounded bottom fan-in but {\it bounded top fan-in}. For the case of {\it multilinear} depth 4 circuits, superpolynomial lower bounds were first proved by Raz and Yehudayoff~\cite{RY08b}. These lower bounds were recently improved in a paper by Fournier, Limaye, Malod and Srinivasan~\cite{FLMS13}.  The main technical tool in the work of Fournier et al was the use of the technique of {\it random restrictions} before using shifted partial derivatives as a complexity measure. By setting a large collection of variables at random to zero, all the product gates with high bottom fan-in got set to zero. Thus the resulting circuit had bounded bottom fanin and then known techniques of shifted partial derivatives could be applied. This idea of random restrictions crucially uses the multilinearity of the circuits, since in multilinear circuits high bottom fanin means {\it many} distinct variables feeding in to a gate, and thus if a large collection of  variables is set at random to zero, then with high probability that gate is also set to zero. 



\vspace{2mm}
\noindent
{\bf Our Results: } 
In this paper, we prove the first superpolynomial lower bounds for general homogeneous depth 4 circuits with no restriction on the fan-in, either top or bottom. The main ingredient in our proof is a new complexity measure of {\it bounded support} shifted partial derivatives. This measure allows us to prove exponential lower bounds for homogeneous depth 4 circuits where all the monomials computed at the bottom layer have only few variables (but possibly large degree/fan-in). This exponential lower bound combined with a careful ``random restriction" procedure that allows us to transform general depth 4 homogeneous circuits to this form gives us our final result. We will now formally state our results. 

  Our main theorem is stated below.

\begin{thm}[Lower bounds for homogeneous $\spsp$ circuits]~\label{thm:main}
There is an explicit family of homogeneous polynomials of degree $n$ in $n^2$ variables in $\VNP$ which requires homogeneous $\spsp$ circuits of size $n^{\Omega(\log\log n)}$ to compute it.
\end{thm}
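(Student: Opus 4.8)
The plan is to prove Theorem~\ref{thm:main} in three stages, following the outline in the introduction: first isolate an explicit candidate hard polynomial and a new complexity measure tailored to it, then prove an exponential lower bound against \emph{bounded-support} homogeneous $\spsp$ circuits using that measure, and finally reduce the general case to the bounded-support case by a random restriction.

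\textbf{The hard polynomial and the measure.} For the explicit family I would take a $\VNP$ version of the Nisan--Wigderson design polynomial: arrange the $n^2$ variables $\{x_{i,j}\}$ as an $n\times n$ grid (indices in $\F_q$, $q\approx n$) and set $f=\sum_{h}\prod_{i}x_{i,h(i)}$, where $h$ ranges over univariate polynomials of suitably bounded degree over $\F_q$, so that any two of the $q^{\Theta(1)}$-many degree-$n$ monomials share only a few variables. This is homogeneous of degree $n$ in $n^2$ variables and its coefficient function is trivially computable, so it lies in $\VNP$. The complexity measure is \emph{bounded-support shifted partial derivatives}: fix an order $k$, a shift parameter $\ell$ and a support bound $m$, and for a polynomial $g$ define $\Gamma_{k,\ell,m}(g)$ to be the dimension of the span of the degree-$(\le m)$-\emph{support} parts of $\{x^{\beta}\,\D_{\alpha}g : |\alpha|=k,\ |\beta|=\ell\}$; that is, project each shifted partial onto the span of monomials touching at most $m$ distinct variables and take the dimension of the resulting set. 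This refines the shifted-partial measure of~\cite{GKKS12,KSS13} by discarding the high-support part, which is precisely where a bounded-support circuit cannot hide complexity.

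\textbf{The bounded-support lower bound.} The core lemma is that $\Gamma_{k,\ell,m}$ is small for any homogeneous $\spsp$ circuit $C=\sum_i\prod_jQ_{ij}$ of size $s$ in which every monomial feeding the bottom $\Pi$ layer has support at most $s_0$ (with possibly unbounded degree/fan-in). Generalizing the GKKS/KSS argument: differentiating a product gate $k$ times and applying the product rule, each term is (a product of at most $k$ derivatives of the $Q_{ij}$) times (the product of the untouched $Q_{ij}$); the first factor has support at most $ks_0$, so after shifting by $x^\beta$ and projecting onto monomials of support $\le m$, the untouched product is forced to contribute a monomial whose variables lie in a fixed small set, and for each of the few choices of which factors to differentiate the contribution lies in (one fixed polynomial) times a space of monomials on at most $m+ks_0$ variables. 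Summing over the $\le s$ product gates yields $\Gamma_{k,\ell,m}(C)\le s\cdot(\text{mild combinatorial factor})\cdot\binom{n^2}{\le m}\cdot(\cdots)$. On the other side, the combinatorial-design property of $f$ makes distinct low-support shifted partials ``almost'' linearly independent, so a Nisan--Wigderson-style rank/counting argument shows $\Gamma_{k,\ell,m}(f)$ is close to the trivial maximum; comparing the two bounds gives an exponential lower bound against bounded-support circuits, strengthening~\cite{GKKS12,KSS13}.

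\textbf{Random restriction and the main obstacle.} Given any homogeneous $\spsp$ circuit $C$ of size $s$ computing $f$, apply a random restriction $\rho$ keeping each variable with a carefully chosen probability $p$ (respecting the grid structure) and setting it to $0$ otherwise. With probability $>1/2$, every bottom monomial of support exceeding $s_0$ is killed --- a fixed such monomial survives with probability at most $p^{s_0}$ and there are at most $s$ of them, so $s\cdot p^{s_0}<1/2$ suffices --- and $C|_\rho$ is then a homogeneous $\spsp$ circuit of size $\le s$ with bottom support $\le s_0$ (homogeneity is preserved since variables go to $0$). Simultaneously, with probability $>1/2$, $f|_\rho$ still has $\Gamma_{k,\ell,m}$ large, because the design property is robust to zeroing out variables. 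A union bound produces a single $\rho$ achieving both, and then the bounded-support upper bound applied to $C|_\rho$ together with the lower bound for $f|_\rho$ forces $s\ge n^{\Omega(\log\log n)}$ after optimizing $k,\ell,m$ and $p$. The hard part is precisely this optimization: the restriction must be aggressive enough to drive all high-support bottom monomials to zero, yet gentle enough that $f|_\rho$ retains a large bounded-support shifted-partial-derivative complexity --- and, unlike the multilinear setting of~\cite{FLMS13}, one cannot simply zero out many variables to kill large product gates, since a bottom monomial may have tiny support but enormous degree. Balancing this tension against the way the bounded-support lower bound degrades as $s_0$ grows is what pins the final bound at quasipolynomial rather than exponential.
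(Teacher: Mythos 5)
Your overall architecture matches the paper's: the same Nisan--Wigderson design polynomial, a bounded-support refinement of shifted partial derivatives, an exponential lower bound against homogeneous $\spsp$ circuits with bottom support $O(\log n)$, and a random restriction to reduce the general case to the bounded-support case. (One secondary remark: the paper's measure does not project shifted partials onto low-support monomials; it shifts only by monomials of degree $\ell$ and support exactly $m$. With your projection variant, the circuit upper bound is less clean, since the low-support part of a product of many $Q_{ij}$'s does not factor through the low-support parts of the factors; the paper's version keeps the untouched product $\prod_{j\notin S}Q_j$ intact and only tracks the support/degree window of the multiplier monomial.)

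The genuine gap is in your restriction step. You keep each variable alive independently with probability $p$ and assert that ``with probability $>1/2$, $f|_\rho$ still has $\Gamma_{k,\ell,m}$ large, because the design property is robust to zeroing out variables.'' This is precisely the step the paper identifies as the obstruction to the naive restriction: with $p=n^{-\epsilon}$ each degree-$n$ monomial of $NW_d$ survives with probability $n^{-\epsilon n}$, and although the expected number of survivors is $n^{d-\epsilon n}$, the survival events are correlated and no concentration or structural statement is proved --- in particular the lower bound needs not just many surviving monomials but, for some set $S$ of $r$ rows, roughly $n^{r(1-\epsilon n/d)}$ distinct surviving order-$r$ prefixes whose completions exist, so that the corresponding partial derivatives are nonzero and pairwise far apart. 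The paper circumvents this by a structured restriction: it restricts the coefficient vector $[f]\in\F_2^{kd}$ of the univariate polynomials defining $NW_d$ to a random affine subspace $A_n$ of codimension at most $\epsilon kn$, built by imposing, for each evaluation point $i$, $\epsilon k$ independent affine constraints on $[f(i)]$. This yields deterministically $|A_n|\ge n^{d-\epsilon n}$, from which a pigeonhole over blocks of $r$ rows produces the required $n^{r(1-\epsilon n/d)}$ distinct nonzero derivatives (Lemma~\ref{lem:many-derivatives}); separately, linear-algebraic arguments about random affine subspaces (Lemmas~\ref{lem:flat-monomial-die}--\ref{lem:tall-monomial-compact-die}) show every bottom monomial of support $\ge\beta\log n$ dies with probability $n^{-\Omega(\epsilon\log\log n)}$, which is what pins the final bound at $n^{\Omega(\log\log n)}$. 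Without either this construction or an actual proof that the independent restriction preserves the derivative structure of $NW_d$, your final union-bound step does not go through.
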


We prove our lower bound for the family of Nisan-Wigderson polynomials $NW_d$ which is based upon the idea of Nisan-Wigderson designs. We give the formal definition in Section~\ref{sec:prelims}. 

As a first step in the proof of Theorem~\ref{thm:main}, we prove an exponential lower bound on the top fan-in of any homogeneous $\spsp$ circuit where every product gate at the bottom level has at most $O(\log n)$ distinct variables feeding into it.  Let homogeneous $\spsp^{\{s\}}$ circuits denote the class of   homogeneous $\spsp$ circuits where every product gate at the bottom level has at most $s$ distinct variables feeding into it (i.e. has support at most $s$). 

\begin{thm}[Lower bounds for homogeneous $\spsp$ circuits with bounded bottom support]~\label{thm:main2}
There exists a constant $\beta > 0$, and an explicit family of homogeneous polynomials of degree $n$ in $n^2$ variables in $\VNP$ such that any homogeneous $\spsp^{\{\beta\log n\}}$ circuit computing it must have top fan-in at least $2^{\Omega(n)}$.
\end{thm}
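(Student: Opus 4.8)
The plan is to use the complexity measure of \restsp\ (in the ``bounded support'' variant announced in the introduction): for a polynomial $f$ and parameters $k$ (order of derivatives) and $\ell$ (number of shifts), consider the dimension of the span of $\{ \bx^{=\ell} \cdot \partial_{\mu} f : |\mu| = k\}$ where, crucially, we only take shifts by monomials whose \emph{support is bounded} by some parameter $m$, i.e.\ we work inside the space of polynomials supported on few variables. First I would record the subadditivity of this measure under addition, so that it suffices to upper bound the measure for a single summand $Q = \prod_j Q_j$ of a homogeneous $\spsp^{\{\beta\log n\}}$ circuit, where each $Q_j$ is a sum of monomials each of support at most $s = \beta\log n$. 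The key structural observation is that a $k$-th order partial derivative of such a product $Q$ is a sum, over ways of distributing the $k$ derivatives among the factors, of terms each of which is a multiple of a product of (at most $k$) ``low support'' monomials times the undifferentiated factors; because $Q$ is homogeneous of degree $n$ with each $Q_j$ of degree $\ge 1$, the product of the undifferentiated factors is itself a monomial-structured object, and differentiating kills at most $k$ factors. The upshot, exactly as in \cite{GKKS12, KSS13} but now tracking \emph{support} rather than \emph{degree/fan-in}, is that every $\bx^{=\ell}\,\partial_\mu Q$ lies in the span of $\{\bx^{=\ell} \cdot (\text{monomial of support} \le ks) \cdot Q'\}$ for a bounded collection of ``derived'' products $Q'$, and hence the measure of $Q$ is at most roughly (number of products $Q'$) $\times$ (number of monomials of degree $\ell$ and support $\le ks + \text{(support of the shift)}$), which is a quantity of the form $\binom{D}{ks}\cdot 2^{O(ks)}\cdot\binom{\ell + O(ks)}{O(ks)}$ or similar.

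Next I would lower bound the measure of the target polynomial $NW_d$. Here I would invoke the analysis of the Nisan-Wigderson polynomials already developed in \cite{KSS13} (and refined in \cite{FLMS13}): for suitable $d$, the polynomial $NW_d$ has the property that the projections $\partial_\mu NW_d$ for $|\mu|=k$ are ``spread out'' — distinct derivatives produce monomials on essentially disjoint variable sets by the design property — so that even after restricting to low-support shifts the dimension is close to the full count $\binom{n}{k}\cdot(\text{number of low-support shift monomials of degree }\ell)$, up to lower-order losses. The design property is precisely what makes the bounded-support restriction on the \emph{shifts} harmless for the lower bound while being devastating for the upper bound, since the NW polynomial's derivatives are themselves of bounded support and survive intersection with the low-support shift space.

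Finally I would choose the parameters. Set $s = \beta\log n$ with $\beta$ a small constant, take the support bound on the shift monomials to be $m = \Theta(\log n)$ (so $ks + m = \Theta(k\log n)$), and pick $k = \epsilon n/\log n$ and $\ell$ appropriately (of order $n$, so that the binomial $\binom{\ell+O(k\log n)}{O(k\log n)}$ is large) so that the ratio
\[
\frac{\text{measure}(NW_d)}{\max_Q \text{measure}(Q)} \ge 2^{\Omega(n)}.
\]
The top fan-in of the circuit is then at least this ratio by subadditivity, giving the $2^{\Omega(n)}$ bound. The main obstacle I anticipate is the upper bound step: one must argue that restricting to \emph{bounded-support} shifts genuinely controls the measure of a bottom-bounded-support product gate even when the product gate has arbitrarily large fan-in and degree — the factors $Q_j$ can individually involve many variables, and it is only the \emph{monomials} inside each $Q_j$ that have small support, so the bookkeeping of ``which monomials can appear in $\partial_\mu Q$ and on how many variables'' must be done carefully, making sure the count of derived products $Q'$ does not blow up. Getting the two support parameters ($\beta\log n$ for the circuit, $\Theta(\log n)$ for the shifts) and the derivative order $k$ to simultaneously satisfy the upper and lower bound constraints with room to spare is the delicate part of the calculation.
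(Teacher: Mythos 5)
Your overall architecture matches the paper's: subadditivity of the bounded-support shifted-partial-derivative measure, an upper bound for a single product gate obtained from the product rule together with the observation that the differentiated part only contributes monomials of support at most $rs$ (so each shifted derivative lies in the span of terms of the form a monomial of support at most $m+rs$ times an undifferentiated sub-product), and a lower bound for $NW_d$ via the design property. However, your parameter choice for the support of the \emph{shift} monomials is fatally wrong: you take $m = \Theta(\log n)$. The upper bound for a product gate unavoidably counts monomials of support up to $m+rs$, and with $r=\Theta(n/\log n)$ and $s=\Theta(\log n)$ one has $rs=\Theta(n)$. When $m=O(\log n)$ and $N=n^2$, the ratio $\binom{N}{m+rs}/\binom{N}{m}$ is at least $\left(\frac{N-m-rs}{m+rs}\right)^{rs} = n^{\Omega(n)} = 2^{\Omega(n\log n)}$, and the degree binomials $\binom{\ell+n-r}{m+rs}/\binom{\ell-1}{m-1}$ incur a comparable loss. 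This swamps the gain of $n^{r}=2^{O(n)}$ coming from the number of distinct derivatives, so the ratio of the two measures is exponentially \emph{small} rather than large, for every admissible $\ell$ and $r$. The point of the paper's definition is to take the shift support to be \emph{exactly} $m$ with $m=\Theta(N)=\Theta(n^2)$: then enlarging the support by $rs$ costs only $\left(\frac{N-m}{m}\right)^{rs}\cdot\mathrm{poly}$, i.e.\ $2^{O(rs)}$ with a constant controlled by $\beta$ and $m/N$, which is dominated by the gain $2^{\Omega(r\log r)}=2^{\Omega(n)}$.

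A second gap is your justification for why the $NW_d$ lower bound survives the support restriction: the claim that ``the NW polynomial's derivatives are themselves of bounded support'' is false, since an order-$r$ derivative of $NW_d$ is a sum of multilinear monomials of degree $n-r$, hence of support $\Theta(n)$. What actually makes the lower bound work is that the leading monomials of the $n^r$ derivatives (taken with respect to monomials picking one variable from each of $r$ fixed rows) are pairwise at distance at least $n-r-d$, so an inclusion--exclusion over shifts shows the collision terms are negligible provided $n^{r}\cdot\binom{N-(n-d-r)}{m-(n-d-r)}/\binom{N}{m}\leq 1$; this is precisely the constraint that forces $r=O(n/\log n)$ and hence caps the final bound at $2^{\Omega(r\log r)}=2^{\Omega(n)}$. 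Without carrying out this collision analysis (and with the wrong $m$, it would not close), the lower bound on the measure of $NW_d$ is unsubstantiated.
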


Observe that since homogeneous $\spsp^{\{s\}}$ circuits are a more general class of circuits than homogeneous $\spsp$ circuits with bottom fan-in at most $s$, our result strengthens the results of of Gupta et al and Kayal et al~\cite{GKKS12, KSS13} when $s = O(\log n)$.

We prove Theorem~\ref{thm:main} by applying carefully chosen random restrictions to both the polynomial family and to any arbitrary homogeneous $\spsp$ circuit and showing that with high probability the circuit simplifies into a homogeneous $\spsp$ circuit with bounded bottom support while the polynomial (even after the restriction) is still rich enough for Theorem~\ref{thm:main2} to hold. 
Our results hold over every field. 

\vspace{2mm}
\noindent
{\bf Organization of the paper :} The rest of the paper is organized as follows.  In Section~\ref{sec:overview}, we provide a high level overview of the proof. In Section~\ref{sec:prelims}, we introduce some notations and preliminary notions used in the paper. In Section~\ref{sec:small-support-lb}, we give a proof of Theorem~\ref{thm:main2}. In Section~\ref{sec:rand-res}, we describe the random restriction procedure and  analyze its effect on the circuit and the polynomial.   In Section~\ref{sec:lowerbounds}, we prove Theorem~\ref{thm:main}. We conclude with some open problems in Section~\ref{sec:conclusion}.


\section{Proof Overview}~\label{sec:overview}
Our proof is divided into two parts. In the first part we show a $2^{\Omega(n)}$ lower bound for homogeneous $\spsp$ circuits whose {\it bottom support} is at most $O(\log n)$. 
To the best of our knowledge, even when the bottom support is $1$, none of the earlier lower bound techniques sufficed for showing nontrivial lower bounds for this model. Thus a new complexity measure was needed. We consider the measure of {\it bounded support} shifted partial derivatives, a refinement of the measure of shifted partial derivatives used in several recent works~\cite{GKKS12,KSS13,KS-depth4, KS-formula, FLMS13}. For this measure, we show that the complexity of the $NW_d$ polynomial (an explicit polynomial in VNP) is {\it high} whereas any subexponential sized homogeneous depth 4 circuit with bounded bottom support has a much smaller complexity measure. Thus for any depth 4 circuit to compute the $NW_d$ polynomial, it must be large -- we show that it must have exponential top fan-in. Thus we get an exponential lower bound for bounded bottom support homogeneous $\spsp$ circuits. We believe this result might be of independent interest. 

In the second part we show how to ``reduce" any $\spsp$ circuit that is not too large to a $\spsp$ circuit with bounded bottom support. This reduction basically follows from a random restriction procedure that sets some of the variables feeding into the circuit to zero. At the same time we ensure that when this random restriction procedure is applied to $NW_d$, the polynomial does not get affected very much, and still has large complexity.  

We could have set variables to zero by picking the variables to set to zero independently at random. For instance consider the following process: Independently keep each variable alive (i.e. nonzero) with probability $1/n^\epsilon$. Then any monomial with $\Omega(\log n)$ distinct variables is set to the zero polynomial with probability at least $1 - 1/n^{\Omega(\log n)}$. Since any circuit of size $n^{o(\log n)}$ will have only $n^{o(\log n)}$ monomials computed at the bottom layer, hence by the union bound, each such monomial with $\Omega(\log n)$ distinct variables will be set to zero. Thus the resulting circuit will have bounded bottom support. The problem with this approach is that we do not know how to analyze the effect of this simple randomized procedure on $NW_d$. Thus we define a slightly more refined random restriction procedure which keeps the $NW_d$ polynomial hard and at the same time makes the $\spsp$ circuit one of bounded bottom support. We describe the details of this procedure in Section~\ref{sec:rralgorithm}


\section{Preliminaries and Notations}~\label{sec:prelims}
\noindent
{\bf Arithmetic Circuits: } An arithmetic circuit over a field $\F$ and a set of variables 
$x_1, x_2, \ldots, x_{N}$ is an directed acyclic graph whose internal nodes are labelled by the field operations and the leaf nodes are labelled by the variables or field elements.  The nodes with fan-out zero are called the output gates and the nodes with fan-in zero are called the leaves. In this paper, we will always assume that there is a unique output gate in the circuit. The {\it size} of the circuit is the number of nodes in the underlying graph and the {\it depth} of the circuit is the length of the longest path from the root to a leaf. We will call a circuit  {\it homogeneous} if the polynomial computed at every node is a homogeneous polynomial. By a $\spsp$ circuit or a depth 4 circuit, we mean a circuit of depth 4 with the top layer and the third layer only have sum gates and the second and the bottom layer have only product gates. In this paper, we will confine ourselves to working with homogeneous depth 4 circuits.  A homogeneous polynomial $P$ of degree $n$ in $N$ variables, which is computed by a homogeneous $\spsp$ circuit can be written as 

\begin{equation}\label{def:model}
P(x_1, x_2, \ldots, x_{N}) = \sum_{i=1}^{T}\prod_{j=1}^{d_i}{Q_{i,j}(x_1, x_2, \ldots, x_{N})}
\end{equation}
Here, $T$ is the top fan-in of the circuit. Since the circuit is homogeneous, we know that for every $i \in \{1, 2, 3, \ldots, T\}$, $$\sum_{j = i}^{d_i} \text{deg}(Q_{i,j}) = n$$
By the support of a monomial $\alpha$, we will refer to the set of variables which have a positive degree in $\alpha$. In this paper, we will also study the class of homogeneous $\spsp$ circuits such that for every $i, j$, every monomial in $Q_{i,j}$ has bounded support. We will now formally define this class. 

\vspace{2mm}
\noindent
{\bf Homogeneous $\Sigma\Pi\Sigma\Pi^{\{s\}}$ Circuits:} A homogeneous $\spsp$ circuit in Equation~\ref{def:model}, is said to be a $\Sigma\Pi\Sigma\Pi^{\{s\}}$ circuit if every product gate at the bottom level has support at most $s$.  Observe that there is no restriction on the bottom fan-in except that implied by the restriction of homogeneity.

\vspace{2mm}
\noindent
{\bf Shifted Partial Derivatives: } In this paper will use a variant of the notion of {\it shifted partial derivatives} which was introduced in~\cite{Kayal12} and has subsequently been the complexity measure used to to prove lower bounds for various restricted classes of depth four circuits and formulas~\cite{FLMS13, GKKS12, KSS13, KS-depth4, KS-formula}.  For a field $\F$, an $N$ variate polynomial $P \in {{{\F}}}[x_1, \ldots, x_{N}]$ and a positive integer $r$, we denote by $\partial^{r} P$, the set of all partial derivatives of order equal to $r$ of $P$. For a polynomial $P$ and a monomial $\gamma$, we denote by ${\partial_{\gamma} (P)}$ the partial derivative of $P$ with respect to $\gamma$. We now reproduce the formal definition from~\cite{GKKS12}.

\begin{define}[Order-$r$ $\ell$-shifted partial derivatives]\label{def:shiftedderivative}
For an $N$ variate polynomial $P \in {\field{F}}[x_1, x_2, \ldots, x_{N}]$ and positive integers $r, \ell \geq 0$, the space of order-$r$ $\ell$-shifted  
partial derivatives of $P$ is defined as
\begin{align}
 \langle \partial^{r} P\rangle_{ \ell} \stackrel{def}{=} \field{F}\mhyphen span\{\prod_{i\in [N]}{x_i}^{j_i}\cdot g  :  \sum_{i\in [N]}j_i = \ell, g \in \partial^{r} P\}
\end{align}
\end{define}

In this paper, we introduce the variation of {\it bounded support} shifted partial derivatives as a complexity measure. The basic difference is that instead of shifting the partial derivatives by all monomials of degree $\ell$, we will shift the partial derivatives only by only those monomials of degree $\ell$ which have support(the number of distinct variables which have non-zero degree in the monomial) exactly equal to $m$. We now formally define the notion. 

\begin{define}[Support-m degree-$\ell$ shifted partial derivatives of order-r]\label{def:restshiftedderivative}
For an $N$ variate polynomial $P \in {\field{F}}[x_1, x_2, \ldots, x_{N}]$ and positive integers $r, \ell, m \geq 0$, the space of support-m degree-$\ell$ shifted  
partial derivatives of order-$r$ of $P$ is defined as
\begin{align}
 \langle \partial^{r} P\rangle_{(\ell, m)} \stackrel{def}{=} \field{F}\mhyphen span\{\prod_{\substack{i\in S \\ S\subseteq [N] \\ |S| = m}}{x_i}^{j_i}\cdot g  :  \sum_{i\in S}j_i = \ell, j_i \geq 1, g \in \partial^{r} P\}
\end{align}
\end{define}

The following property follows from the definition above. 

\begin{lem}~\label{subadditive}
For any two multivariate polynomials $P$ and $Q$ in $\F[x_1, x_2, \ldots, x_{N}]$ and any positive integers $r, \ell, m$, and scalars $\alpha$ and $\beta$
$$\dim(\langle \partial^{r} (\alpha P + \beta Q)\rangle_{(\ell, m)})  \leq \dim(\langle \partial^{r} P\rangle_{(\ell, m)}) + \dim(\langle \partial^{r} Q\rangle_{(\ell, m)})$$
\end{lem}

In the rest of the paper, we will use the term $(m, \ell, r)$-shifted partial derivatives to refer to support-m degree-$\ell$ shifted partial derivatives of order-r of a polynomial. For any linear or affine space $V$ over a field $\F$, we will use $\dim(V)$ to represent the dimension of $V$ over $\F$. We will use the dimension of the space  $\langle\partial^{r} P\rangle_{(\ell, m)}$ which we denote by $\dim(\langle\partial^{r} P\rangle_{(\ell, m)})$ as the measure of complexity of a polynomial. 

\vspace{2mm}
\noindent
{\bf Nisan-Wigderson Polynomials:} We will show our lower bounds for a family of polynomials in $\VNP$ which were used for the first time in the context of lower bounds in~\cite{KSS13}. The construction is based upon the intuition that over any finite field, any two distinct low degree polynomials do not agree at too many points. For the rest of this paper, we will assume $n$ to be of the form $2^k$ for some positive integer $k$. Let $\F_n$ be a field of size $n$. For the set of $N = n^2$ variables $\{x_{i,j} : i, j \in [n]\} $ and $d < n$, we define the degree $n$ homogeneous polynomial $NW_{d}$ as 

$$NW_d = \sum_{\substack{f(z) \in \F_n[z] \\
                        deg(f) \leq d-1}} \prod_{i \in [n]} x_{i,f(i)}$$
                       
From the definition, we can observe the following properties of $NW_d$. 
\begin{enumerate}
\item The number of monomials in $NW_d$ is exactly $n^d$. 
\item Each of the monomials in $NW_d$ is multilinear.
\item Each monomial corresponds to evaluations of a univariate polynomial of degree at most $d-1$ at all points of $\F_n$. Thus, any two distinct monomials agree in at most $d-1$ variables in their support. 
\end{enumerate}

For any $S \subseteq [n]$ and each $f \in \F_n[z]$, we define the monomial $${m_f^S} = \prod_{i\in S} x_{i, f(i)}$$ and $${m_f} = \prod_{i\in[n]} x_{i, f(i)}$$ We also define the set ${\cal M}^S$ to represent the set $\{\prod_{i \in S}\prod_{j \in [n]} x_{i,j}\}$.  Clearly, 
$$NW_d = \sum_{\substack{f(z) \in \F_n[z] \\
                        deg(f) \leq d-1}} m_f $$

\vspace{2mm}
\noindent
{\bf Monomial Ordering and Distance: }
We will also use the notion of a monomial being an extension of another as defined below. 
\begin{define}
A monomial $\theta$ is said to be an extension of a monomial $\tilde{\theta}$, if $\theta$ divides $\tilde{\theta}$. 
\end{define}

\noindent
In this paper, we will imagine our variables to be coming from a $n\times n$ matrix $\{x_{i,j}\}_{i, j \in [n]}$. We will also consider the following total order on the variables. $x_{i_1, j_1} > x_{i_2, j_2}$ if either $i_1 < i_2$ or $i_1 = i_2$ and $j_1 < j_2$. This total order induces a lexicographic order on the monomials. For a polynomial $P$,  we will use the notation $\lm(P)$ to indicate the leading monomial of $P$ under this monomial ordering.

We will use the following notion of distance between two monomials which was also used in~\cite{CM13}. 
\begin{define}[Monomial distance]
Let $m_1$ and $m_2$ be two monomials over a set of variables. Let $S_1$ and $S_2$ be the multiset of variables in $m_1$ and $m_2$ respectively, then the distance $\Delta(m_1, m_2)$ between $m_1$ and $m_2$ is the min$\{|S_1| - |S_1\cap S_2|, |S_2| - |S_1\cap S_2|\}$ where the cardinalities are the order of the multisets.   
\end{define} 

In this paper, we will invoke this definition only for multilinear monomials of the same degree. In this special case, we have the following crucial observation.

\begin{obs}~\label{obs:multilinear-dist} 
Let $\alpha$ and $\beta$ be two multilinear monomials of the same degree which are at a distance $\Delta$ from each other. If $\text{Supp}(\alpha)$ and $\text{Supp}(\beta)$ are the supports of $\alpha$ and $\beta$ respectively, then $$|\text{Supp}(\alpha)| - |\text{Supp}(\alpha)\cap \text{Supp}(\beta)| =  |\text{Supp}(\beta)| - |\text{Supp}(\alpha)\cap \text{Supp}(\beta)| =  \Delta$$
\end{obs}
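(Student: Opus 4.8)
The plan is to prove Observation~\ref{obs:multilinear-dist} by unwinding the definition of monomial distance and exploiting the fact that a multilinear monomial has no repeated variables, so that its multiset of variables collapses to its support set. First I would record that if $\alpha$ is multilinear, then every variable occurs in $\alpha$ with exponent $0$ or $1$; hence the multiset $S_1$ of variables of $\alpha$ appearing in the definition of $\Delta$ is in fact a genuine set and equals $\text{Supp}(\alpha)$, and likewise $S_2 = \text{Supp}(\beta)$ for $\beta$. This immediately gives $|S_1| = |\text{Supp}(\alpha)|$, $|S_2| = |\text{Supp}(\beta)|$, and $S_1 \cap S_2 = \text{Supp}(\alpha) \cap \text{Supp}(\beta)$, so in particular $|S_1 \cap S_2| = |\text{Supp}(\alpha) \cap \text{Supp}(\beta)|$.

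Next I would use the hypothesis that $\alpha$ and $\beta$ have the same degree, together with the observation that for a multilinear monomial the degree equals the cardinality of the support. Thus $|\text{Supp}(\alpha)| = |\text{Supp}(\beta)| = \text{deg}(\alpha)$, and substituting into the two quantities appearing inside the $\min$ in the definition of $\Delta(\alpha,\beta)$ shows they coincide: both equal $|\text{Supp}(\alpha)| - |\text{Supp}(\alpha)\cap \text{Supp}(\beta)| = |\text{Supp}(\beta)| - |\text{Supp}(\alpha)\cap \text{Supp}(\beta)|$. Hence the minimum of the two is exactly this common value, which is by definition $\Delta$, establishing the claimed chain of equalities.

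I do not expect any genuine obstacle here; the statement is essentially a bookkeeping consequence of the definitions. The only point that deserves an explicit word is the transition from the multiset formulation of monomial distance (which is the correct notion in general, so as to treat monomials such as $x_1^2$ properly) to ordinary set operations on supports — this transition is valid precisely because of multilinearity together with the equal-degree assumption, and I would state it cleanly so that no ambiguity remains.
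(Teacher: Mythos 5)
Your proof is correct and is exactly the argument the paper leaves implicit: the observation is stated without proof, being an immediate consequence of multilinearity (which turns the multisets $S_1, S_2$ in the distance definition into the supports) and the equal-degree hypothesis (which makes the two terms inside the $\min$ coincide). Nothing further is needed.
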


\vspace{2mm}
\noindent
{\bf Approximations: } We will repeatedly refer to the following lemma to approximate expressions during our calculations. 

\begin{lem}[\cite{GKKS12}]~\label{lem:approx}
Let $a(n), f(n), g(n) : \Z_{>0}\rightarrow \Z_{>0}$ be integer valued functions such that $(f+g) = o(a)$. Then,
$$\log \frac{(a+f)!}{(a-g)!} = (f+g)\log a \pm O\left( \frac{(f+g)^2}{a}\right)$$
\end{lem}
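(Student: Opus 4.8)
The plan is to write $\log\frac{(a+f)!}{(a-g)!}$ as the telescoping sum $\sum_{k=a-g+1}^{a+f}\log k$, which has exactly $f+g$ summands since $f$ and $g$ are positive integers. Because $f+g = o(a)$, for all sufficiently large $n$ we have $a-g+1 \geq 1$, so every term is well defined; moreover, for each $k$ in this range $(k-a)/a$ lies in the interval $[-g/a,\, f/a]$ and hence has absolute value at most $\tfrac{1}{2}$.

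Next I would factor $\log a$ out of each term, writing $\log k = \log a + \log\!\left(1 + \tfrac{k-a}{a}\right)$, and apply the elementary estimate $|\log(1+x) - x| \leq x^2$ valid for $|x|\leq \tfrac12$. This shows each term equals $\log a + \tfrac{k-a}{a} + O\!\left(\tfrac{(k-a)^2}{a^2}\right)$, so that summing over $k$ from $a-g+1$ to $a+f$ produces $(f+g)\log a$ plus two error sums. The linear error sum is $\tfrac{1}{a}\sum_{j=-(g-1)}^{f} j$, which by the formula for an arithmetic progression has absolute value $O\!\left(\tfrac{f^2+g^2}{a}\right) = O\!\left(\tfrac{(f+g)^2}{a}\right)$. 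The quadratic error sum is $O\!\left(\tfrac{1}{a^2}\sum_{j=-g}^{f} j^2\right) = O\!\left(\tfrac{f^3+g^3}{a^2}\right)$, and since $f^3+g^3 \leq (f+g)^3 = (f+g)\cdot (f+g)^2 = o(a)\cdot (f+g)^2$, this quantity is $o\!\left(\tfrac{(f+g)^2}{a}\right)$ and is therefore absorbed into the first error term. Combining the pieces yields $\log\frac{(a+f)!}{(a-g)!} = (f+g)\log a \pm O\!\left(\tfrac{(f+g)^2}{a}\right)$, as claimed.

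I do not expect a genuine obstacle here, as this is a routine Taylor-expansion estimate; the only point requiring a little care is that one should resist simply invoking Stirling's formula $\log m! = m\log m - m + O(\log m)$. Carrying that $O(\log m)$ term through the subtraction of the two factorials leaves a spurious additive error of order $\log a$, which can dominate $\tfrac{(f+g)^2}{a}$ precisely in the regime where $f+g$ is small. The telescoping-sum argument avoids this by never approximating $\log a$ itself, only the relative corrections $\log\!\left(1 + \tfrac{k-a}{a}\right)$, each of which is genuinely small.
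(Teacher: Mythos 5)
Your proof is correct: the telescoping sum $\sum_{k=a-g+1}^{a+f}\log k$ has exactly $f+g$ terms, the expansion $\log k=\log a+\log\bigl(1+\tfrac{k-a}{a}\bigr)$ is valid since $(f+g)=o(a)$ forces $a\to\infty$ and $|k-a|/a\le 1/2$ eventually, and your bounds on the linear and quadratic error sums are right (the cubic term is indeed absorbed because $(f+g)^3/a^2 = o\bigl((f+g)^2/a\bigr)$). The paper itself states this lemma without proof, citing~\cite{GKKS12}, so there is nothing internal to compare against; your argument is the standard one, and your remark about why a naive application of Stirling's formula would leave a spurious $O(\log a)$ error is a fair observation.
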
 



In our setup, very often $(f+g)^2$ will  be $\theta(a)$. In this case, the error term will be an absolute constant. Hence, up to multiplication by constants, $\frac{(a+f)!}{(a-g)!} = a^{(f+g)}$.






We will also use the following basic fact in our proof. 

\begin{fact}~\label{fact:numsolutions}
  The number of {\it positive} integral solutions of the equation 
$$\sum_{i = 1}^t y_i = k $$ equals  ${k-1 \choose t-1}$.
\end{fact}

As a last piece of notation, for any  $i\times j$ matrix $H$ over $\F_2$ and a vector $\alpha \in \F^{i}_2$, we denote by $H||\alpha$ to be the $i\times (j+1)$ matrix which when restricted to the first $j$ columns is equal to $H$ and whose last column is $\alpha$. Similarly, for any vector $\alpha \in \F^i_2$ and any $b\in\F_2$, $\alpha||b$ is the $i+1$ dimensional vector where $b$ is appended to $\alpha$.

\noindent



\section{Lower bounds for $\spsp^{\{O(\log n)\}}$ circuits}~\label{sec:small-support-lb}
In this section, we will prove Theorem~\ref{thm:main2}. We will prove an exponential lower bound on the top fan-in for homogeneous $\spsp$ circuits such that every product gate at the bottom has a bounded number of variables feeding into it. We will use the dimension of the span of $(m, \ell, r)$-shifted partial derivatives as the complexity measure. We will prove our lower bound for the $NW_d$ polynomial. The proof will be in two parts. In the first part, we will prove an upper bounded on the complexity of the circuit. Then, we will prove a lower bound on the complexity of the $NW_d$ polynomial. Comparing the two  will then imply our lower bound. The bound holds for $NW_d$ for any $d = \delta n$, where $\delta$ is a constant such that $0 < \delta < 1$.

\subsection{Complexity of homogeneous depth 4 $\spsp^{\{s\}}$ circuits}
Let $C$ be a homogeneous $\spsp^{\{s\}}$ circuit computing the $NW_d$ polynomial. We will now prove an upper bound on the complexity of a product gate in such a circuit. The bound on the complexity of the circuit follows from the subadditivity of the complexity measure. 

\begin{lem}~\label{lem:product-gate-bound1}
Let $Q = \prod_{i = 1}^n Q_i$ be a product gate at the second layer from the top in a homogeneous $\spsp^{\{s\}}$ circuit computing a homogeneous degree $n$ polynomial in $N$ variables.  For any positive integers $m, r, s, \ell$ satisfying $m+rs \leq \frac{N}{2}$ and $m+rs \leq \frac{\ell}{2}$, 

$$\dim(\langle\partial^{r} Q\rangle_{(\ell, m)}) \leq \text{poly}(nrs){n+r \choose r}{N \choose m+rs}{\ell+n-r \choose m+rs}$$
\end{lem}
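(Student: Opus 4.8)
The plan is to bound the dimension of the space of $(m,\ell,r)$-shifted partial derivatives of a single product gate $Q = \prod_{i=1}^n Q_i$, where each $Q_i$ is a sum of monomials of support at most $s$. The first step is to understand an order-$r$ partial derivative of $Q$. By the product rule, $\partial_\gamma Q$ for a monomial $\gamma$ of degree $r$ is a sum of terms, each of which is obtained by distributing the $r$ derivatives among the factors $Q_i$; since differentiating $Q_i$ by a monomial removes that monomial's support from the ``active'' variables, each surviving summand has the shape $(\text{something of support} \le rs) \cdot \prod_{i \in T} Q_i$ for a subset $T \subseteq [n]$ of size at least $n-r$. So $\partial^r Q$ lies in the $\F$-span of products $\mu \cdot \prod_{i \in T} Q_i$ where $\mu$ is a monomial of support at most $rs$ and $|T| \ge n-r$; the number of such ``types'' $(T, \text{which } Q_i \text{'s were hit and by what monomial})$ is at most $\text{poly}(nrs)\binom{n+r}{r}$ or so, accounting for the choice of $T$ (or rather the complementary multiset of $\le r$ indices, bounded by $\binom{n+r}{r}$) and the bounded book-keeping of which low-support monomial was extracted.

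Next I would shift. A generator of $\langle \partial^r Q\rangle_{(\ell,m)}$ is $x^\beta \cdot g$ where $x^\beta$ has support exactly $m$, degree $\ell$, and $g \in \partial^r Q$. Writing $g$ via the decomposition above, each such generator is a scalar combination of terms $x^\beta \cdot \mu \cdot \prod_{i \in T} Q_i$. The key observation is that $x^\beta \cdot \mu$ is a monomial whose support is contained in $\text{supp}(\beta) \cup \text{supp}(\mu)$, a set of size at most $m + rs$, and whose total degree is $\ell + (\deg \mu) \le \ell + n - r$ — here I use that $\deg g = n - r$ so $\deg \mu \le n - r$ after accounting for the degree already sitting in the $Q_i$'s; more carefully, $\deg(x^\beta \mu) = \ell + \deg\mu$ and the remaining factor $\prod_{i\in T} Q_i$ has degree $n - r - \deg\mu + (\text{stuff})$, but the cleanest bound is that the monomial multiplying $\prod_{i\in T}Q_i$ has degree at most $\ell + n - r$ and support at most $m + rs$. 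So every generator of $\langle \partial^r Q\rangle_{(\ell,m)}$ lies in the span of $\{\nu \cdot \prod_{i \in T} Q_i\}$ where $T$ ranges over the (polynomially-plus-$\binom{n+r}{r}$-many) types and $\nu$ ranges over monomials of support $\le m + rs$ and degree $\le \ell + n - r$. Counting the latter: choosing the support of $\nu$ costs $\binom{N}{m+rs}$, and distributing the degree over that support costs at most $\binom{\ell + n - r}{m+rs}$ by Fact~\ref{fact:numsolutions} (stars and bars, with the conditions $m+rs \le N/2$ and $m+rs \le \ell/2$ ensuring we are in the regime where the crude bound is clean). Multiplying the three counts and the $\text{poly}(nrs)$ book-keeping factor gives exactly the claimed bound $\text{poly}(nrs)\binom{n+r}{r}\binom{N}{m+rs}\binom{\ell+n-r}{m+rs}$.

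The main obstacle I anticipate is the careful accounting of degree in the shifted term: one must argue that after extracting a monomial of support $\le rs$ from the derivative and then shifting by a support-$m$ monomial of degree $\ell$, the monomial that multiplies the residual product $\prod_{i\in T} Q_i$ indeed has degree at most $\ell + n - r$ and support at most $m + rs$ — the support bound is immediate from the union of two supports, but the degree bound requires noting $\deg \mu \le n - r$ and being honest about whether the shift monomial and $\mu$ overlap in support (they may, which only helps). A secondary subtlety is that the residual product $\prod_{i\in T} Q_i$ is \emph{not} a monomial, so the correct statement is a spanning-set argument: $\langle\partial^r Q\rangle_{(\ell,m)}$ is contained in the span of $\{(\text{low-support, bounded-degree monomial}) \times (\text{one of polynomially-plus-}\binom{n+r}{r}\text{ fixed polynomials})\}$, and the dimension of that span is at most the number of (monomial, polynomial) pairs. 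I would also double-check that the hypotheses $m + rs \le N/2$ and $m + rs \le \ell/2$ are used exactly to replace $\binom{N}{m+rs}$-type expressions and the stars-and-bars count by the stated binomials without extra constant loss; these are the standard regularity conditions from~\cite{GKKS12, KSS13} and should plug in routinely via Lemma~\ref{lem:approx}.
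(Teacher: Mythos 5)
Your proposal is correct and follows essentially the same route as the paper: apply the product rule to write each order-$r$ derivative as a combination of $\bigl(\text{polynomial with monomials of support} \le rs,\ \text{degree} \le n-r\bigr)\cdot\prod_{j\notin S}Q_j$ with $|S|\le r$, absorb the shift into the monomial part, and bound the dimension by counting pairs of a subset $S$ (at most $\binom{n+r}{r}$ choices) and a monomial of support at most $m+rs$ and degree at most $\ell+n-r$, using $m+rs\le N/2$ and $m+rs\le \ell/2$ for the monotonicity of the binomial coefficients. The degree and support bookkeeping you flag as a potential obstacle is handled in the paper exactly as you propose, so there is no gap.
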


\begin{proof}
By the application of chain rule, any partial derivative of order $r$ of $Q$ is a linear combination of a number of product terms. Each of these product terms is of the form $\prod_{i \in S}\partial_{\gamma_i}(Q_i)\prod_{j\in [n]\setminus S}Q_j$, where $S$ is a subset of $\{1, 2, \ldots, n\}$ of size at most $r$ and $\gamma_i$ are monomials such that $\sum_{i \in S}\text{deg}(\gamma_i) = r$. Also, observe that $\prod_{i \in S}\partial_{\gamma_i}(Q_i)$ is of degree at most $n-r$. In this particular special case all $Q_i$ have support at most $s$, so every monomial in $\prod_{i \in S}\partial_{\gamma_i}(Q_i)$ has support at most $rs$. Shifting  these derivatives is the same as multiplying them with monomials of degree $\ell$ and support equal to $m$. So, $(m, \ell, r)$-shifted partial derivative of order $r$ can be expressed as sum of the  product of $\prod_{j\in [n]\setminus S}Q_j$ for $S\subseteq [n]$ of size at most $r$, and a monomial of support between $m$ and $m+rs$ and degree between $\ell$ and $\ell+n-r$. 

We can choose the set $S$ in ${n+r \choose r}$ ways. The second part in each term is a monomial of degree between $l$ and $\ell+n-r$ and support between $m$ and $m+rs$. The number of monomials over $N$ variables of support between $m$ and $m+rs$ and degree between $\ell$ and $\ell+n-r$  equals 
$$\sum_{i = 0}^{n-r} {\sum_{j = 0}^{rs} {N \choose m+j}{\ell+i -1\choose m+j-1}}$$ 
Now, in the range of choice of our parameters $m, r, s, \ell$, the binomial coefficients increase monotonically with $i$ and $j$. Hence, we can upper bound the dimension by $\text{poly}(nrs){n+r \choose r}{N \choose m+rs}{\ell+n-r-1 \choose m+rs-1}$. 
\end{proof}

For a homogeneous $\spsp$ circuit where each of the bottom level product gates is of support at most $s$, Lemma~\ref{lem:product-gate-bound1} immediately implies the following upper bound on the complexity of the circuit due to subadditivity from Lemma~\ref{subadditive}. 
\begin{cor}[Upper bound on circuit complexity]~\label{cor:circuit-complexity-bound}
Let $C = \sum_{j = 1}^{T}\prod_{i = 1}^n Q_{i,j}$ be a  a homogeneous $\spsp^{\{s\}}$ circuit computing a homogeneous degree $n$ polynomial in $N$ variables. For any $m, r, s, \ell$ satisfying $m+rs \leq \frac{N}{2}$ and $m+rs \leq \frac{\ell}{2}$, 
$$\dim(\langle\partial^{r} C\rangle_{(\ell, m)}) \leq T\times \text{poly}(nrs){n+r \choose r}{N \choose m+rs}{\ell+n-r-1 \choose m+rs-1}$$
\end{cor}

\subsection{Lower bound on the complexity of the $NW_d$ polynomial}
We will now prove a lower bound on the complexity of the $NW_d$ polynomial. For this, we will first observe that distinct partial derivatives of the $NW_d$ polynomial are {\it far} from each other in some sense and then show that shifting such partial derivatives gives us a lot of distinct shifted partial derivatives. Recall that we defined the set ${\cal M}^S$ to represent the set $\{\prod_{i \in S}\prod_{j \in [n]} x_{i,j}\}$.
We start with the following observation.

\begin{lem}~\label{lem:many-partial-derivatives}
For any positive integer $r$ such that $n-r > d$ and $r < d-1$, the set $\{\partial_{\alpha}(NW_d) : \alpha \in {\cal M}^{[r]}\}$ consists of $|{\cal M}^{[r]}| = n^r$ nonzero distinct polynomials. 
\end{lem}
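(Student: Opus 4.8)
The plan is to exhibit, for each $\alpha \in {\cal M}^{[r]}$, a monomial that appears in $\partial_\alpha(NW_d)$ but in none of the other partial derivatives in the set, which simultaneously shows each such derivative is nonzero and that they are pairwise distinct. Recall that a monomial $\alpha \in {\cal M}^{[r]}$ is a product $\prod_{i \in [r]} x_{i, a_i}$ for some choice of values $a_1, \dots, a_r \in [n]$. Since $r < d-1$, Lagrange interpolation guarantees there exist polynomials $f \in \F_n[z]$ of degree at most $d-1$ with $f(i) = a_i$ for all $i \in [r]$; in fact there are exactly $n^{d-r}$ such $f$, so $\alpha$ divides $m_f$ for at least one $f$, and hence $\partial_\alpha(m_f) = m_f^{[n]\setminus[r]} = \prod_{i=r+1}^n x_{i,f(i)}$ is a (monic, up to the characteristic) monomial occurring in $\partial_\alpha(NW_d)$.

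The key step is to argue this surviving monomial pins down $\alpha$. First I would use the fact (property 3 of $NW_d$, via the design property) that two distinct $f, g$ of degree $\le d-1$ agree on at most $d-1$ points, so on the coordinate block $\{r+1, \dots, n\}$, which has size $n - r > d > d-1$, the monomials $\partial_\alpha(m_f)$ and $\partial_\alpha(m_g)$ are distinct whenever $f \ne g$; this is where the hypothesis $n - r > d$ is used. Consequently $\partial_\alpha(NW_d) = \sum_{f : \deg f \le d-1,\, f|_{[r]} = a} \partial_\alpha(m_f)$ is a sum of $n^{d-r}$ distinct monomials and in particular is nonzero. Next, suppose $\partial_\alpha(NW_d)$ and $\partial_\beta(NW_d)$ share a monomial $\mu = \prod_{i=r+1}^n x_{i,c_i}$ for $\alpha, \beta \in {\cal M}^{[r]}$. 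Then $\mu = \partial_\alpha(m_f) = \partial_\beta(m_g)$ for some admissible $f, g$, forcing $f(i) = c_i = g(i)$ for all $i \in \{r+1, \dots, n\}$; since $|\{r+1,\dots,n\}| = n - r > d > \deg(f - g)$, we conclude $f = g$, and therefore the restrictions to $[r]$ agree, i.e. $\alpha = \beta$. Hence the $n^r$ polynomials are distinct.

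I do not expect a serious obstacle here; the only thing to be slightly careful about is the field characteristic, since over a field of characteristic $p$ a monomial coefficient could in principle vanish — but because each $m_f$ is multilinear and the variable sets $\{x_{i,\cdot}\}$ are disjoint across $i$, the monomial $\partial_\alpha(m_f)$ appears with coefficient exactly $1$ in $\partial_\alpha(m_f)$, and distinct $f$'s (restricting to $a$ on $[r]$) contribute distinct monomials as shown above, so no cancellation occurs and each $\partial_\alpha(NW_d)$ genuinely has $n^{d-r}$ monomials. The remaining bookkeeping — counting admissible $f$, invoking Lagrange interpolation over $\F_n$, and checking the index arithmetic $n - r > d$ — is routine. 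I would also remark in passing that this lemma is the combinatorial heart of the later lower bound: it says the $\partial_\alpha(NW_d)$ behave like a ``design'' of far-apart polynomials, which is exactly what the shifting argument in the next lemmas will exploit.
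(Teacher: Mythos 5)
Your proposal is correct and follows essentially the same route as the paper's proof: use $r<d-1$ to produce an extension $m_f$ of $\alpha$ so the derivative is nonzero, and use $n-r>d$ together with the fact that two distinct degree-$(d-1)$ polynomials cannot agree on all of $\{r+1,\dots,n\}$ to conclude that $\partial_\alpha(NW_d)$ and $\partial_\beta(NW_d)$ share no monomial. Your added remarks on the exact count $n^{d-r}$ and on why no cancellation can occur in any characteristic are correct refinements of details the paper leaves implicit.
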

\begin{proof}
We need to show the following two statements. 
\begin{itemize}
\item $\forall \alpha \in {\cal M}^{[r]}$, $\partial_{\alpha}(NW_d)$ is a non zero polynomial. 
\item $\forall \alpha \neq \beta \in {\cal M}^{[r]}$, $\partial_{\alpha}(NW_d) \neq \partial_{\beta}(NW_d)$.
\end{itemize}
For the first item,  observe that, since $r < d-1$, for every $\alpha \in {\cal M}^{[r]}$, there is a polynomial $f$ of degree at most $d-1$ in $\F_n[z]$ such that $\alpha = \prod_{i = 1}^r x_{i, f(i)}$. So, $\partial_{\alpha}(m_f) \neq 0$ since $m_f$ is an extension of $\alpha$, in fact, there are many such extensions. Also, observe for any two extensions $m_f$ and $m_g$, $\partial_{\alpha}(m_f)$ and $\partial_{\alpha}(m_g)$ are multilinear monomials at a distance at least $n-r-d > 0$ from each other. Hence, $\partial_{\alpha}(NW_d) = \sum_{g}  \partial_{\alpha}(m_g)$ is a non zero polynomial, where the sum is over all $g \in \F_n[z]$ of degree $\leq d-1$ such that $m_g$ is an extension of $\alpha$. 

For the second item, let us now consider the leading monomials of $\partial_{\alpha}(NW_d)$ and $\partial_{\beta}(NW_d)$. These leading monomials each come from some distinct polynomials $f, g \in \F_n[z]$ of degree at most $d-1$. Also, since $\alpha \neq \beta$ and $n-r > d$,  $\partial_{\alpha}(m_f) \neq \partial_{\beta}(m_g)$. In fact, $\partial_{\alpha}(NW_d) \text{ and } \partial_{\beta}(NW_d)$ do not have a common monomial. Therefore, $\partial_{\alpha}(NW_d) \neq \partial_{\beta}(NW_d)$.  
\end{proof}

\begin{rmk}~\label{rmk2}
Observe that there is nothing special about the set ${\cal M}^{[r]}$ and the Lemma~\ref{lem:many-partial-derivatives} holds for $\{\cal M\}^S$ for any set $S$, such that $S \subseteq [n]$ and $|S| < d-1$.
\end{rmk}

In the proof above, we observed that for any $\alpha \neq \beta \in {\cal M}^{[r]}$, the leading monomials of $\partial_{\alpha}(NW_d)$ and $ \partial_{\beta}(NW_d)$ are multilinear monomials of at a distance at least $n-r-d$ from each other. We will exploit this structure to show that shifting the polynomials in the set $\{\partial_{\alpha}(NW_d) : \alpha \in {\cal M}^{[r]}\}$ by monomials of support m and degree $\ell$ results in many linearly independent  shifted partial derivatives. We will first prove the following lemma.

\begin{lem}~\label{clm:manyshifts1}
Let $\alpha$ and $\beta$ be two distinct multilinear monomials of equal degree such that the distance between them is  $\Delta$. Let $S_{\alpha}$ and  $S_{\beta}$ be the set of all monomials obtained by shifting $\alpha$ and $\beta$ respectively with monomials of degree $\ell$ and support exactly $m$ over $N$ variables. Then $|S_{\alpha}\cap S_{\beta}| \leq {N-\Delta \choose m-\Delta}{\ell-1 \choose m-1}$. 
\end{lem}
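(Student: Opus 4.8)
The plan is to bound the size of the intersection $S_\alpha \cap S_\beta$ by analyzing what a common monomial must look like. Let $\mu \in S_\alpha \cap S_\beta$. Then $\mu = \alpha \cdot p = \beta \cdot q$ where $p$ and $q$ are monomials of degree $\ell$ and support exactly $m$. First I would set up notation: since $\alpha, \beta$ are multilinear of equal degree at distance $\Delta$, by Observation \ref{obs:multilinear-dist} we have $|\mathrm{Supp}(\alpha) \setminus \mathrm{Supp}(\beta)| = |\mathrm{Supp}(\beta) \setminus \mathrm{Supp}(\alpha)| = \Delta$. Write $A = \mathrm{Supp}(\alpha) \setminus \mathrm{Supp}(\beta)$ and $B = \mathrm{Supp}(\beta) \setminus \mathrm{Supp}(\alpha)$, so $|A| = |B| = \Delta$ and $A \cap B = \emptyset$.

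The key structural observation is this: since $\mu$ is divisible by $\alpha$, every variable in $A$ divides $\mu$ but does not divide $\beta$, so each variable of $A$ must appear in $q$ (with exponent $\geq 1$); symmetrically every variable of $B$ must appear in $p$. In particular, $p$ is a degree-$\ell$, support-$m$ monomial whose support \emph{contains} the $\Delta$-element set $B$. So the number of choices for the support of $p$ is at most $\binom{N - \Delta}{m - \Delta}$ (choose the remaining $m - \Delta$ support variables from the $N - \Delta$ variables outside $B$), and once the support of $p$ is fixed, $p$ is a monomial of degree $\ell$ on a fixed set of $m$ variables each with exponent $\geq 1$; by Fact \ref{fact:numsolutions} there are $\binom{\ell - 1}{m - 1}$ such monomials. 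Since $\mu$ is determined by $\alpha$ and $p$, and $\alpha$ is fixed, this gives $|S_\alpha \cap S_\beta| \leq \binom{N-\Delta}{m-\Delta}\binom{\ell-1}{m-1}$ as claimed.

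I should double-check one subtlety: it is conceivable that for some choice of support and exponents of $p$ the resulting $\mu = \alpha p$ is not actually in $S_\beta$ (i.e. cannot be written as $\beta q$ with $q$ of the right support size) — but that only makes the intersection \emph{smaller}, so the upper bound still holds; the counting argument only needs that every element of the intersection arises from \emph{some} such $p$. The main (minor) obstacle is just being careful that the constraint ``$B \subseteq \mathrm{Supp}(p)$'' is genuinely forced — this uses multilinearity of $\beta$ together with the fact that $\mu = \beta q$ means the $\beta$-part already ``uses up'' the variables of $\mathrm{Supp}(\beta)$, so any variable of $B$ appearing in $\mu$ beyond what $\beta$ contributes must come from $q$; wait — more directly, a variable $v \in B$ does not divide $\alpha$ at all, yet $v$ divides $\mu = \alpha p$, hence $v$ divides $p$. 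That is the clean one-line justification, and it is symmetric to the statement for $A$ and $q$, which I stated above but do not even need for the bound. So the argument is essentially a clean counting estimate with no real technical difficulty; the only thing to present carefully is the forced-containment step.
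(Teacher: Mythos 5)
Your proposal is correct and follows essentially the same route as the paper's proof: the paper phrases the forced containment via the unique degree-$\Delta$, support-$\Delta$ monomial $\gamma$ with $\mathrm{lcm}(\alpha,\beta)=\alpha\gamma$, which is exactly your observation that the support of the shift $p$ must contain $\mathrm{Supp}(\beta)\setminus\mathrm{Supp}(\alpha)$, and the counting $\binom{N-\Delta}{m-\Delta}\binom{\ell-1}{m-1}$ is identical. Your explicit remark that the count is only an upper bound is a welcome clarification (the paper somewhat loosely says ``exactly'').
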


\begin{proof}
From the distance property, we know that there is a unique monomial $\gamma$ of degree $\Delta$ and support $\Delta$  such that $\alpha\gamma$ is the lowest degree extension of $\alpha$ which is divisible by $\beta$.  Therefore, any extension of $\alpha$ which is also an extension of $\beta$ must have the support of $\alpha\gamma$ as a subset. In particular, for a shift of $\alpha$ to lie in $S_{\beta}$, $\alpha$ must be shifted by monomial of degree $\ell$ and support $m$ which is an extension of  $\gamma$. Hence, the freedom in picking the support is restricted to picking some $m-\Delta$ variables from the remaining $N - \Delta$ variables. Once the support is chosen, the number of possible degree $\ell$ shifts on this support equals ${\ell-1 \choose m-1}$ by Fact~\ref{fact:numsolutions}. Hence, the number of shifts of degree equal to $\ell$ and support equal to m of $\alpha$ which equals some degree $\ell$ and support m shift of $\beta$ is exactly ${N-\Delta \choose m-\Delta}{\ell-1 \choose m-1}$. 
\end{proof}

We will now prove the following lemma, which is essentially an application of Claim~\ref{clm:manyshifts1} to the $NW_d$ polynomial. For any monomial $\alpha$ and positive integers $\ell, m$, we will denote by $S_{\ell, m}(\alpha)$ the set of all shifts of $\partial_{\alpha}NW_d$ by monomials of degree $\ell$ and support m. More formally, 
$$S_{\ell, m}(\alpha) = \{\gamma\cdot\partial_{\alpha}(NW_d) : \gamma = \prod_{\substack{i\in U \\ U\subseteq [N] \\ |U| = m}}{x_i}^{j_i} , \sum_{i\in U}j_i = \ell, j_i \geq 1\}$$
also, let $$LM_{\ell, m}(\alpha) = \{{\lm}(f) : f \in S_{\ell, m}(\alpha)\}$$ 
\begin{lem}~\label{lem:manyshifts2}
For any positive integers $r$, $m$ and $\ell$ such that $n-r > d$ and $r < d-1$, let $\alpha$ and $\beta$ be two distinct monomials in ${\cal M}^{[r]}$. Then $|S_{\ell, m}(\alpha)\cap S_{\ell, m}(\beta)| \leq {N-(n-d-r) \choose m- (n-d-r)}{\ell-1 \choose m-1}$. 
\end{lem}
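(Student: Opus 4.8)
\textbf{Proof plan for Lemma~\ref{lem:manyshifts2}.}

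The plan is to reduce this statement about the $NW_d$ polynomial to the elementary monomial-counting fact established in Claim~\ref{clm:manyshifts1}. The key structural input, already extracted in the proof of Lemma~\ref{lem:many-partial-derivatives}, is that for distinct $\alpha, \beta \in {\cal M}^{[r]}$ the polynomials $\partial_\alpha(NW_d)$ and $\partial_\beta(NW_d)$ have no monomial in common, and more quantitatively that every monomial of $\partial_\alpha(NW_d)$ is a multilinear monomial of degree $n-r$ which is at distance at least $n-d-r$ from every monomial of $\partial_\beta(NW_d)$ (any two extensions $m_f, m_g$ coming from distinct polynomials $f,g$ of degree $\le d-1$ agree in at most $d-1$ of their $n$ coordinates, hence after removing the $r$ differentiated coordinates they agree in at most $d-1-r$ of the remaining $n-r$, so the distance is at least $(n-r)-(d-1-r) \ge n-d-r$).

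First I would observe that a shift $\gamma \cdot \partial_\alpha(NW_d)$ lying in $S_{\ell,m}(\beta)$ forces, in particular, that \emph{some} monomial $\gamma \cdot \mu$ (with $\mu$ a monomial of $\partial_\alpha(NW_d)$) coincides with \emph{some} monomial $\gamma' \cdot \nu$ (with $\nu$ a monomial of $\partial_\beta(NW_d)$ and $\gamma'$ another degree-$\ell$ support-$m$ monomial). So $|S_{\ell,m}(\alpha) \cap S_{\ell,m}(\beta)|$ is bounded by the number of monomials that arise both as a degree-$\ell$ support-$m$ shift of some monomial $\mu$ of $\partial_\alpha(NW_d)$ and as such a shift of some monomial $\nu$ of $\partial_\beta(NW_d)$. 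Since the measure is really the dimension of a span, it suffices to bound $|LM_{\ell,m}(\alpha) \cap LM_{\ell,m}(\beta)|$, or more simply to bound the number of common shifted monomials; I would fix a pair $(\mu,\nu)$ with $\mu$ a monomial of $\partial_\alpha(NW_d)$ and $\nu$ a monomial of $\partial_\beta(NW_d)$, apply Claim~\ref{clm:manyshifts1} with $\Delta = \Delta(\mu,\nu) \ge n-d-r$ to get that the set of shifts they have in common has size at most ${N-\Delta \choose m-\Delta}{\ell-1 \choose m-1} \le {N-(n-d-r) \choose m-(n-d-r)}{\ell-1\choose m-1}$ (using that ${N-\Delta \choose m-\Delta}$ is decreasing in $\Delta$ in the relevant range), and then argue that this single bound already suffices.

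The one subtlety — and the place I expect to spend the most care — is that naively ranging over all pairs $(\mu,\nu)$ would introduce an extra multiplicative factor equal to the number of monomials in the derivatives, which we do not want. The resolution is that we only need to count \emph{distinct} common shifted monomials, and a clean way to do this is via leading monomials: for $f \in S_{\ell,m}(\alpha) \cap S_{\ell,m}(\beta)$, its leading monomial is simultaneously $\gamma \cdot \lm(\partial_\alpha(NW_d))$ and $\gamma' \cdot \lm(\partial_\beta(NW_d))$ for the appropriate shifts (since multiplication by a monomial is order-preserving, $\lm(\gamma \cdot g) = \gamma \cdot \lm(g)$). Thus $LM_{\ell,m}(\alpha) \cap LM_{\ell,m}(\beta)$ is contained in the set of common degree-$\ell$ support-$m$ shifts of the \emph{single} pair of monomials $\lm(\partial_\alpha(NW_d))$ and $\lm(\partial_\beta(NW_d))$, which are at distance at least $n-d-r$; one application of Claim~\ref{clm:manyshifts1} then gives the bound, and since each element of $S_{\ell,m}(\alpha)\cap S_{\ell,m}(\beta)$ contributes a distinct leading monomial, $|S_{\ell,m}(\alpha)\cap S_{\ell,m}(\beta)| = |LM_{\ell,m}(\alpha)\cap LM_{\ell,m}(\beta)| \le {N-(n-d-r)\choose m-(n-d-r)}{\ell-1\choose m-1}$, as claimed. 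I would close by noting the hypotheses $n-r>d$ and $r<d-1$ are exactly what guarantee $n-d-r \ge 1$ and that the derivatives are nonzero and distinct (via Lemma~\ref{lem:many-partial-derivatives}), so all the binomial coefficients above are well-defined.
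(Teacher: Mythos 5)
Your proposal is correct and follows essentially the same route as the paper: both reduce to Claim~\ref{clm:manyshifts1} by observing that $\lm(\gamma\cdot g)=\gamma\cdot\lm(g)$, so any common shift forces the corresponding shifts of the two leading monomials $\partial_\alpha(m_f)$ and $\partial_\beta(m_g)$ (which are at distance at least $n-d-r$) to coincide, and the intersection is therefore bounded by the number of common shifts of that single pair of monomials. The only cosmetic difference is that you phrase the counting via injectivity of the leading-monomial map on $S_{\ell,m}(\alpha)$, which is a harmless restatement of the paper's argument.
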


\begin{proof}
In the proof of Lemma~\ref{lem:many-partial-derivatives}, we have observed that the leading monomials of $\partial_{\alpha}(NW_d)$ and $\partial_{\beta}(NW_d)$ are equal to $\partial_{\alpha}(m_f)$ and $\partial_{\beta}(m_g)$ for two distinct polynomials $f, g \in \F_n[z]$ of degree at most $d-1$. Hence, $\partial_{\alpha}(m_f)$ and $\partial_{\beta}(m_g)$ are multilinear monomials at a distance at least $\Delta = n-r-d$ from each other. 

Since monomial orderings respect multiplication by the same polynomial,  we know that the leading monomial of a shift equals the shift of the leading monomial. Therefore, if $\gamma_{\alpha}$ and $\gamma_{\beta}$ are two monomials of degree $\ell$ and support equal to $m$ such that $\gamma_{\alpha}\partial_{\alpha}(NW_d) = \gamma_{\beta}\partial_{\beta}(NW_d)$, then $\gamma_{\alpha}\partial_{\alpha}(m_f) = \gamma_{\beta}\partial_{\beta}(m_g)$. Hence, the $|S_{\ell, m}(\alpha)\cap S_{\ell, m}(\beta)|$ is at most the number of shifts of  $\partial_{\alpha}(m_f)$ which is also a shift of  $\partial_{\beta}(m_g)$. By Lemma~\ref{clm:manyshifts1}, this is at most ${N-(n-d-r) \choose m- (n-d-r)}{\ell-1 \choose m-1}$.
\end{proof}

We will now prove a lower bound on the dimension of the span of $(m, \ell, r)$-shifted partial derivatives of the $NW_d$ polynomial. For this, we will use the following proposition  from~\cite{GKKS12}, the proof of which is a simple application of Gaussian elimination. 

\begin{prop}[\cite{GKKS12}]
For any field $\F$, let ${\cal P} \subseteq \F[z]$ be any finite set of polynomials. Then, 
$$\dim(\F\mbox{-}span({\cal P})) = |\{\lm(f) : f \in \F\mbox{-}span({\cal P})\}| $$
\end{prop}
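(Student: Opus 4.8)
The statement to prove is the proposition from \cite{GKKS12} that for any finite set of polynomials $\cal P$, the dimension of its $\F$-span equals the number of distinct leading monomials appearing among elements of that span.

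\textbf{Approach.} The plan is to exhibit an explicit basis of $\F\mbox{-}span(\cal P)$ whose elements have pairwise distinct leading monomials, and conversely to argue that no collection of polynomials with pairwise distinct leading monomials can be linearly dependent. Together these pin the dimension exactly to the number of achievable leading monomials. The whole argument is essentially Gaussian elimination phrased in terms of leading monomials, so I would keep it short.

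\textbf{Key steps.} First I would prove the easy inequality $\dim(\F\mbox{-}span(\cal P)) \geq |\{\lm(f) : f \in \F\mbox{-}span(\cal P)\}|$: pick, for each achievable leading monomial $\mu$, one polynomial $f_\mu$ in the span with $\lm(f_\mu)=\mu$; then the $f_\mu$ are linearly independent, since in any nontrivial linear combination $\sum_\mu c_\mu f_\mu$ the largest monomial $\mu^\ast$ among those with $c_{\mu^\ast}\neq 0$ survives (it cannot be cancelled by any other $f_\mu$, whose leading monomial is strictly smaller and whose lower-order terms are all strictly smaller than $\mu^\ast$ as well), so the combination is nonzero. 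Hence these polynomials form a linearly independent subset of the span of the claimed size. Second, for the reverse inequality, I would take a basis $g_1,\dots,g_k$ of $\F\mbox{-}span(\cal P)$ and run the standard reduction: repeatedly, if two basis elements share a leading monomial, replace one by an appropriate scalar combination that kills that leading term, strictly decreasing it in the monomial order; since the order is a well-order on monomials of bounded degree this terminates, yielding a basis with pairwise distinct leading monomials. Each such leading monomial is the leading monomial of an element of the span, so $k \leq |\{\lm(f) : f\in\F\mbox{-}span(\cal P)\}|$. Combining the two inequalities gives equality.

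\textbf{Main obstacle.} There is no real obstacle here — the only thing to be careful about is that the monomial order is a well-ordering (or at least well-founded) on the relevant finite-degree monomials so that the elimination process terminates, and that the leading-term cancellation in step one genuinely isolates $\mu^\ast$. Both are immediate for the lexicographic order used in the paper. I expect the author's proof to be a one-line appeal to Gaussian elimination, exactly along these lines.
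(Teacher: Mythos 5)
Your proof is correct and matches the paper's intended argument: the paper does not write out a proof at all but simply remarks that the proposition "is a simple application of Gaussian elimination," which is exactly the two-inequality elimination argument you give.
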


Therefore, in order to lower bound  $\dim(\langle\partial^{r} NW_d\rangle_{(\ell, m)})$, it would suffice to obtain a lower bound on the size of the set $\bigcup_{\alpha} LM_{ \ell, m}(\alpha)$, where the union is over all monomials $\alpha$ of degree equal to $r$. To obtain this lower bound, we will show a lower bound on the size of the set  $\bigcup_{\alpha \in {\cal M}^{[r]}} LM_{\ell, m}(\alpha)$.

\begin{lem}~\label{lem:manyshiftsNW}
Let $d = \delta n$ for any constant $0<\delta<1$. Let $\ell, m, r$ be positive integers such that $n-r > d$, $r < d-1$, $m \leq N$, $m = \theta(N)$ and  for $\phi = {\frac{N}{m}}$,  $r$ satisfies $ r \leq \frac{ (n-d)\log{ \phi} \pm O(\phi\frac{(n-d-r)^2}{N})}{\log n + \log{ \phi}}$. Then, $$\dim(\langle\partial^{r} NW_d\rangle_{(\ell, m)}) \geq 0.5n^r{N \choose m}{\ell-1 \choose m-1}$$
\end{lem}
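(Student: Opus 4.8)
The plan is to lower bound $\dim(\langle\partial^r NW_d\rangle_{(\ell,m)})$ via the leading-monomial proposition of~\cite{GKKS12}: it suffices to lower bound $|\bigcup_{\alpha\in{\cal M}^{[r]}} LM_{\ell,m}(\alpha)|$. First I would note that by Lemma~\ref{lem:many-partial-derivatives} there are exactly $n^r$ distinct nonzero derivatives $\partial_\alpha(NW_d)$ for $\alpha\in{\cal M}^{[r]}$, and each set $S_{\ell,m}(\alpha)$ has size exactly ${N\choose m}{\ell-1\choose m-1}$ (choose the support, then apply Fact~\ref{fact:numsolutions}), hence $|LM_{\ell,m}(\alpha)| = {N\choose m}{\ell-1\choose m-1}$ as well since $\lm$ is injective on $S_{\ell,m}(\alpha)$ (distinct shifts of a fixed polynomial have distinct leading monomials). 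So each $LM_{\ell,m}(\alpha)$ is large; the only issue is overlap between different $\alpha$.

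The core step is an inclusion–exclusion / union-bound argument:
\[
\Big|\bigcup_{\alpha\in{\cal M}^{[r]}} LM_{\ell,m}(\alpha)\Big| \geq \sum_\alpha |LM_{\ell,m}(\alpha)| - \sum_{\alpha\neq\beta} |LM_{\ell,m}(\alpha)\cap LM_{\ell,m}(\beta)| \geq n^r{N\choose m}{\ell-1\choose m-1} - n^{2r}\cdot\Big|S_{\ell,m}(\alpha)\cap S_{\ell,m}(\beta)\Big|_{\max},
\]
where I use $|LM_{\ell,m}(\alpha)\cap LM_{\ell,m}(\beta)| \leq |S_{\ell,m}(\alpha)\cap S_{\ell,m}(\beta)|$ and Lemma~\ref{lem:manyshifts2} to bound the latter by ${N-(n-d-r)\choose m-(n-d-r)}{\ell-1\choose m-1}$. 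Then it remains to show the correction term is at most half the main term, i.e.
\[
n^{2r}{N-(n-d-r)\choose m-(n-d-r)}{\ell-1\choose m-1} \leq \tfrac12 n^r{N\choose m}{\ell-1\choose m-1},
\]
which after cancelling ${\ell-1\choose m-1}$ reduces to showing $n^r \cdot {N-(n-d-r)\choose m-(n-d-r)} \big/ {N\choose m} \leq \tfrac12$. Writing $\Delta = n-d-r$, the ratio ${N-\Delta\choose m-\Delta}\big/{N\choose m}$ equals $\frac{m(m-1)\cdots(m-\Delta+1)}{N(N-1)\cdots(N-\Delta+1)}$, which is roughly $(m/N)^\Delta = \phi^{-\Delta}$ up to the lower-order factor captured by Lemma~\ref{lem:approx}; so the condition becomes $n^r \phi^{-(n-d-r)} \leq \tfrac12$, i.e. $r\log n \leq (n-d-r)\log\phi - O(1)$, and solving for $r$ gives exactly the stated hypothesis $r \leq \frac{(n-d)\log\phi \pm O(\phi\frac{(n-d-r)^2}{N})}{\log n + \log\phi}$ (the error term coming from applying Lemma~\ref{lem:approx} to ${N-\Delta\choose m-\Delta}/{N\choose m}$ with $m=\theta(N)$, so $\phi=N/m$ is a constant and the error is $O(\phi(n-d-r)^2/N)$).

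I would finish by collecting these estimates: under the stated parameter constraints the correction term is at most $0.5 n^r{N\choose m}{\ell-1\choose m-1}$, hence $|\bigcup_\alpha LM_{\ell,m}(\alpha)| \geq 0.5 n^r{N\choose m}{\ell-1\choose m-1}$, and by the leading-monomial proposition this is a lower bound on $\dim(\langle\partial^r NW_d\rangle_{(\ell,m)})$.

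The main obstacle is the bookkeeping in the final inequality: carefully tracking that ${N-\Delta\choose m-\Delta}/{N\choose m} = \Theta(\phi^{-\Delta})$ with the right constant/error via Lemma~\ref{lem:approx} (this is where $m=\theta(N)$ is used, to keep $\phi$ a constant), and verifying that the chosen range of $r$ — which is what makes $n^{2r}$ lose to the gap $\phi^{-\Delta}$ — is consistent with the other constraints $n-r>d$, $r<d-1$ needed for Lemmas~\ref{lem:many-partial-derivatives} and~\ref{lem:manyshifts2}. Everything else is a routine union bound plus the already-established structural lemmas about distances between derivatives of $NW_d$.
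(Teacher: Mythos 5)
Your proposal is correct and follows essentially the same route as the paper's proof: inclusion--exclusion on the leading-monomial sets, the per-$\alpha$ count ${N\choose m}{\ell-1\choose m-1}$, the pairwise intersection bound from Lemma~\ref{lem:manyshifts2}, and Lemma~\ref{lem:approx} to convert the ``correction term is at most half the main term'' condition into the stated constraint on $r$. One small caveat: the inequality $|LM_{\ell,m}(\alpha)\cap LM_{\ell,m}(\beta)|\le|S_{\ell,m}(\alpha)\cap S_{\ell,m}(\beta)|$ is not literally valid (two distinct polynomials can share a leading monomial), but since $\lm$ commutes with multiplication by a monomial, the argument inside Lemma~\ref{lem:manyshifts2} bounds the leading-monomial intersection by the same quantity ${N-(n-d-r)\choose m-(n-d-r)}{\ell-1\choose m-1}$, which is exactly how the paper applies it.
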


\begin{proof}
Recall that ${\cal M}^{[r]} = \{\prod_{i = 1}^r\prod_{j \in [n]} x_{i,j}\}$. We have argued in Lemma~\ref{lem:many-partial-derivatives} that for each $\alpha, \beta \in {\cal M}^{[r]}$, such that $\alpha \neq \beta$, $\partial_{\alpha}(NW_d) \neq \partial_{\beta}(NW_d)$ and both of these are non zero polynomials. As discussed above,  we will prove a lower bound on the size of the set $\bigcup_{\alpha \in {\cal M}^{[r]}} LM_{\ell, m}(\alpha)$.  From the principle of inclusion-exclusion, we know 
$$|\bigcup_{\alpha \in {\cal M}^{[r]}} LM_{ \ell, m}(\alpha)| \geq \sum_{\alpha \in {\cal M}^{[r]}}|LM_{\ell, m}(\alpha)| -\sum_{\alpha \neq \beta \in {\cal M}^{[r]}} |LM_{ \ell, m}(\alpha) \cap LM_{\ell, m}(\beta)|$$  
Let us now bound both these terms separately. 
\begin{itemize}
\item Since shifting preserves monomial orderings, therefore for any $\gamma \neq \tilde{\gamma}$ of degree $\ell$ and support $m$, and for any $\alpha \in {\cal M}^{[r]}, \lm(\gamma\partial_{\alpha}(NW_d)) \neq \lm(\tilde{\gamma}\partial_{\alpha}(NW_d))$. Hence, for each $\alpha \in {\cal M}^{[r]}$, $|LM_{ \ell, m}(\alpha)|$ is the number of different shifts possible, which is equal to the number of distinct monomials of degree $\ell$ and support $m$ over $N$ variables. Hence, $$|LM_{ \ell, m}(\alpha)| = {N \choose m}{\ell-1 \choose m-1}$$.
\item For any two distinct $\alpha, \beta \in {\cal M}^{[r]}$, from Lemma~\ref{lem:manyshifts2}, $$|LM_{\ell, m}(\alpha) \cap LM_{\ell, m}(\beta)| \leq {N-(n-d-r) \choose m- (n-d-r)}{\ell-1 \choose m-1}$$ 

\end{itemize}

Therefore, 
$$|\bigcup_{\alpha \in {\cal M}^{[r]}} LM_{ \ell, m}(\alpha)| \geq |{\cal M}^{[r]}|{N \choose m}{\ell-1 \choose m-1} - {|{\cal M}^{[r]}| \choose 2}{N-(n-d-r) \choose m- (n-d-r)}{\ell-1 \choose m-1} $$  
To simplify this bound, we will show that for the choice of our parameters, the second term is at most the half the first term. In this case, we have 
$$|\bigcup_{\alpha \in {\cal M}^{[r]}} LM_{\ell, m}(\alpha)| \geq 0.5|{\cal M}^{[r]}|{N \choose m}{\ell-1 \choose m-1}$$

We need to ensure, 
 $$\frac{{|{\cal M}^{[r]}| \choose 2}{N-(n-d-r) \choose m- (n-d-r)}{\ell-1 \choose m-1}}{|{\cal M}^{[r]}|{N \choose m}{\ell-1 \choose m-1} } \leq 0.5$$
 It suffices to ensure

$$ \frac{|{\cal M}^{[r]}|{N-(n-d-r) \choose m- (n-d-r)}}{{N \choose m}} \leq 1 $$
which is the same as ensuring that $$|{\cal M}^{[r]}|\times \frac{(N-(n-d-r))!}{N!}\times \frac{m!}{(m-(n-d-r))!} \leq 1$$

Now, using the  approximation from Lemma~\ref{lem:approx},
\begin{eqnarray*}
\log\frac{N!}{(N-(n-d-r))!} &=& (n-d-r)\log N \pm O\left(\frac{(n-d-r)^2}{N}\right) \text{and} \\
\log\frac{m!}{(m-(n-d-r))!} &=& (n-d-r)\log m \pm O\left(\frac{(n-d-r)^2}{m}\right)
\end{eqnarray*}

Thus we need to ensure that

$$\log |{\cal M}^{[r]}| \leq \log\left(\frac{N}{m}\right)^{n-d-r} \pm O\left(\frac{(n-d-r)^2}{N}\right) \pm O\left(\frac{(n-d-r)^2}{m}\right)$$
 Substituting $|{\cal M}^{[r]}| = n^r$, we need
$$r\log n \leq \log\left(\frac{N}{m}\right)^{n-d-r} \pm  O\left(\frac{(n-d-r)^2}{N} + \frac{(n-d-r)^2}{m}\right)$$ 

Substituting $m = \frac{N}{\phi}$ (and noting that $\phi > 1$),  we require 
 $$r\log n \leq (n-d-r)\log{\phi} \pm  O\left(\phi\frac{(n-d-r)^2}{N}\right).  $$
 Thus we require
 $$r \leq \frac{ (n-d)\log{ \phi} \pm  O(\phi\frac{(n-d-r)^2}{N})}{\log n + \log{ \phi}}$$
 
Observe that for any constant $0< \delta < 1$ such that $d = \delta n$, $r$ can be chosen any constant times $\frac{n}{\log n}$ by choosing $\phi$ to be an appropriately large constant. 
So, for such a choice of $r$, $$\dim(\langle\partial^{r} NW_d\rangle_{(\ell, m)}) \geq 0.5|{\cal M}^{[r]}|{N \choose m}{\ell-1 \choose m-1}$$
For $|{\cal M}^{[r]}| = n^r$, we have $$\dim(\langle\partial^{r} NW_d\rangle_{(\ell, m)}) \geq 0.5n^r{N \choose m}{\ell-1 \choose m-1}$$
\end{proof}

\begin{rmk}~\label{rmk:bounded-sup-lb}
The proof above shows something slightly more general than  a lower bound on just the complexity of the $NW_d$ polynomial. The only property of the $NW_d$ polynomial that we used here was that the leading monomials of any two distinct partial derivatives of it were {\it far} from each other. We will crucially use this observation in the proof of our main theorem. Also, there is nothing special about using the  set ${\cal M}^{[r]}$. The proof works for any set of monomials ${\cal M}^{S} = \{\prod_{i \in S}\prod_{j \in [n]} x_{i,j}\}$, where $S$ is a subset of $\{1, 2, 3,\ldots, n\}$ of size exactly $r$. 
\end{rmk}

\subsection{Top fan-in lower bound}
We are now ready to prove our lower bound on the top fan-in of any homogeneous $\spsp^{\{\beta\log n\}}$ (for some constant $\beta$) and computes the $NW_d$ polynomial, where $d = \delta n$ for some constant $\delta$ between $0$ and $1$. 

\begin{thm}~\label{thm:bounded-support-lower-bound}
Let $d = \delta n$ for any constant $0<\delta<1$. There exists a constant $\beta$ such that all homogeneous $\spsp^{\{\beta\log n\}}$ circuits which compute the $NW_d$ polynomial have top fan-in at least $2^{\Omega(n)}$.
\end{thm}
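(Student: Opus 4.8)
The plan is to combine the upper bound on circuit complexity from Corollary~\ref{cor:circuit-complexity-bound} with the lower bound on the complexity of $NW_d$ from Lemma~\ref{lem:manyshiftsNW}, and conclude that the top fan-in $T$ must be exponentially large. Concretely, suppose $C = \sum_{j=1}^T \prod_{i=1}^n Q_{i,j}$ is a homogeneous $\spsp^{\{s\}}$ circuit computing $NW_d$ with $s = \beta\log n$ for a small constant $\beta$ to be fixed. First I would choose the parameters $r, m, \ell$ carefully. Following Lemma~\ref{lem:manyshiftsNW}, I would pick $r = c\cdot \tfrac{n}{\log n}$ for a suitable small constant $c$ (choosing $\phi = N/m$ to be a large enough constant so that the constraint $r \le \big((n-d)\log\phi \pm O(\phi (n-d-r)^2/N)\big)/(\log n + \log\phi)$ is satisfied), set $m = N/\phi = \Theta(N)$, and choose $\ell$ so that $m + rs \le \ell/2$ and $m+rs \le N/2$ — e.g. $\ell = \Theta(N)$ with a large enough constant. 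Note that with $r = \Theta(n/\log n)$ and $s = \beta\log n$, we get $rs = \Theta(\beta n)$, which is a small constant times $n$; since $m = \Theta(N) = \Theta(n^2)$, the quantity $m + rs$ is dominated by $m$, so the hypotheses of Corollary~\ref{cor:circuit-complexity-bound} are met for appropriate constants.

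Next I would write down the inequality $\dim(\langle\partial^r NW_d\rangle_{(\ell,m)}) \le \dim(\langle\partial^r C\rangle_{(\ell,m)})$ and plug in both bounds:
$$0.5\, n^r \binom{N}{m}\binom{\ell-1}{m-1} \;\le\; T\cdot \text{poly}(nrs)\binom{n+r}{r}\binom{N}{m+rs}\binom{\ell+n-r-1}{m+rs-1}.$$
Solving for $T$, this gives
$$T \;\ge\; \frac{0.5\, n^r}{\text{poly}(nrs)\binom{n+r}{r}} \cdot \frac{\binom{N}{m}}{\binom{N}{m+rs}} \cdot \frac{\binom{\ell-1}{m-1}}{\binom{\ell+n-r-1}{m+rs-1}}.$$
The core of the argument is to show that the right-hand side is $2^{\Omega(n)}$. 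I would estimate each of the three ratios. For the first ratio, $n^r/\binom{n+r}{r} \ge n^r / 2^{n+r}$, and since $r\log n = \Theta(n)$ while $n + r = \Theta(n)$, the numerator $n^r$ wins if the constant $c$ in $r = cn/\log n$ is chosen large enough relative to the base of the exponential — this contributes a factor $2^{\Omega(n)}$ (and the $\text{poly}(nrs)$ in the denominator is negligible). The second ratio $\binom{N}{m}/\binom{N}{m+rs}$ is at least $1$ in the regime $m + rs \le N/2$ since binomial coefficients increase toward the middle, so it only helps. For the third ratio I would use Lemma~\ref{lem:approx}: writing out $\log\binom{\ell-1}{m-1} - \log\binom{\ell+n-r-1}{m+rs-1}$ and carefully tracking the $(\ell - m)$-type terms, the key point is that shifting by degree-$\ell$ monomials with the extra $n - r$ degrees and $rs$ support costs roughly $(n-r)\log(\ell/m) + rs\log(\text{something})$, which is $O(n)$ (since $\ell/m$ is a constant and $rs = O(n)$), so this ratio is at worst $2^{-O(n)}$.

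The main obstacle I anticipate is the bookkeeping in the third ratio: I need to verify that the loss from passing to support $m + rs$ and degree $\ell + n - r$ (as forced by the chain rule applied to product gates of support $s$) is only $2^{O(n)}$ with an explicitly controllable constant, and then ensure this constant is strictly smaller than the $\Omega(n)$ gain from the $n^r/\binom{n+r}{r}$ term. This is precisely where the bound $s = \beta\log n$ enters: $rs = \Theta(\beta n)$, so by taking $\beta$ small (and correspondingly tuning $c$, $\phi$, $\ell$), the net exponent stays positive. Once all the constants are pinned down — $\delta$ given, then choose $\phi$ large, then $c$ accordingly, then $\beta$ small enough, then $\ell$ — the inequality yields $T \ge 2^{\Omega(n)}$, completing the proof. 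I would present the parameter choices first, then the three ratio estimates as separate displayed computations, and finish by assembling them.
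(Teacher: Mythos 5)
Your overall strategy is exactly the paper's: compare Corollary~\ref{cor:circuit-complexity-bound} against Lemma~\ref{lem:manyshiftsNW}, solve for $T$, and estimate the three ratios with the parameter regime $r=\Theta(n/\log n)$, $m,\ell=\Theta(N)$, $s=\beta\log n$, tuning $\phi$ first and $\beta$ last. However, one of your three estimates is wrong in a way you would need to fix before the accounting closes.

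You claim that $\binom{N}{m}/\binom{N}{m+rs}\ge 1$ ``since binomial coefficients increase toward the middle, so it only helps.'' The monotonicity goes the other way: since $m<m+rs\le N/2$, the coefficient $\binom{N}{m+rs}$ is the \emph{larger} one, so this ratio is strictly less than $1$. Its actual value is, up to constants, $\left(\frac{m}{N-m}\right)^{rs}=2^{-\Theta(rs\log\phi)}$ with $m=N/\phi$, i.e.\ an exponential \emph{loss} of roughly $2^{-\alpha\beta n\log\phi}$ that must be charged against the $2^{\Omega(n)}$ gain from $n^{r}/\binom{n+r}{r}$. This is not fatal --- the loss has an $rs$ prefactor, so it is absorbed by the same ``take $\beta$ small after fixing $\phi$ and the constant in $r$'' step you already invoke for the third ratio (and this is precisely what the paper does, requiring $rs\log\frac{N-m}{m}\ll r\log r$). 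But as written your ledger omits a negative term of order $\beta n\log\phi$, and since you chose $\phi$ large to make $r$ large, you must verify explicitly that $\beta$ can be taken small enough relative to $1/\log\phi$; the claim that the second ratio ``only helps'' would let you skip a constraint that is genuinely there. The rest of the argument (the lower bound $n^{r}/\binom{n+r}{r}\ge 2^{(c-1-o(1))n}$ requiring $c>1$, hence $\phi$ large, and the $(n-r)\log\frac{\ell}{\ell-m}$ loss in the third ratio) is consistent with the paper's computation, which uses the slightly sharper bound $\binom{n+r}{r}\le\left(\frac{2(n+r)}{r}\right)^{r}$ to extract a gain of $2^{\Omega(r\log r)}$.
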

\begin{proof}
By comparing the complexities of the circuit and the polynomial as given by Corollary~\ref{cor:circuit-complexity-bound} and Lemma~\ref{lem:manyshiftsNW}, the top fan-in of the circuit must be at least 

\begin{equation}~\label{eqn:top-fan-in-1}
\frac{0.5n^{r}{N\choose m}{\ell-1 \choose m-1}}{\text{poly}(nrs){n+r \choose r}{N \choose m+rs}{\ell+n-r \choose m+rs}}
\end{equation}

This bound holds for any choice of positive integers $\ell, m, r$, a constant $\beta$ such that $s = \beta\log n$ which satisfy the constraints in the hypothesis of Corollary~\ref{cor:circuit-complexity-bound} and Lemma~\ref{lem:manyshiftsNW}. In other words, we want these parameters to satisfy 

\begin{itemize}
\item $m+rs \leq \frac{N}{2}$ 
\item $m+rs \leq \frac{\ell}{2}$
\item $m = \theta(N)$
\item $n-r > d$
\item $r < d-1$
\item For $\phi = {\frac{N}{m}}$,  $ r \leq \frac{ (n-d)\log{ \phi} \pm O\left(\phi \frac{(n-d-r)^2}{N}\right)}{\log n + \log{ \phi}}$
\end{itemize}

In the rest of the proof, we will show that there exists a choice of these parameters such that we get a bound of $2^{\Omega(n)}$ from Expression~\ref{eqn:top-fan-in-1}. We will show the existence of such parameters satisfying the asymptotics $\ell = \theta(N)$, $r = \theta\left( \frac{n}{\log n} \right)$ and $s = \theta(\log n)$. In the rest of the proof, we will crucially use these asymptotic bounds for various approximations.



For this, we will group together and approximate the terms in the ratio $\frac{0.5n^{r}{N\choose m}{\ell-1 \choose m-1}}{\text{poly}(nrs){n+r \choose r}{N \choose m+rs}{\ell+n-r \choose m+rs}}$
\begin{itemize}
\item $\frac{{N \choose m}}{{N \choose m+rs}} = \frac{(N-m-rs)!(m+rs)!}{(N-m)!m!} = (\frac{m}{N-m})^{rs}$ upto some constant factors, as long as $(rs)^2 = \theta (N) = \theta(m)$. 
\item $\frac{{\ell-1 \choose m-1}}{{\ell+n-r \choose m+rs}} = {\frac{(\ell-1)!}{(m-1)!(\ell-m)!} \times \frac{(m+rs)!(\ell-m+n-r-rs)!}{(\ell+n-r)!}}$. We now pair up things we know how to approximate within constant factors. $\frac{{\ell-1 \choose m-1}}{{\ell+n-r \choose m+rs}} = \frac{(\ell-1)!}{(\ell+n-r)} \times \frac{(m+rs)!}{(m-1)!} \times \frac{(\ell-m+n-r-rs)!}{(\ell-m)!} = \text{poly(n)} \times {\frac{1}{\ell^{n-r}}}\times m^{rs} \times {\frac{(\ell-m)^{n-r}}{(\ell-m)^{rs}}}$. This simplifies to $\text{poly(n)} \times {\left(\frac{m}{\ell-m}\right)}^{rs} \times {\left(\frac{\ell-m}{\ell}\right)}^{n-r}$.
\item $\frac{n^{r}}{{n+r \choose r}} \geq \frac{n^{r}}{{\left(\frac{2(n+r)}{r}\right)}^{r}}$. We just used Stirling's approximation here. 

\end{itemize}

In the range of our parameters, the approximations above imply that the top fan-in, up to polynomial factors is at least 

$${\left(\frac{r}{3}\right)}^r\times{\left(\frac{m}{\ell-m}\right)}^{rs} \times {\left(\frac{\ell-m}{\ell}\right)}^{n-r} \times \left(\frac{m}{N-m}\right)^{rs}$$

Simplifying further, this is at least

$$2^{\Omega(r\log r - rs\log\frac{\ell-m}{m} - (n-r)\log\frac{\ell}{\ell-m}-rs\log\frac{N-m}{m})}$$

Recall that we will set $m$ and $\ell$ to be $\theta(N)$ and $r$ to be $\theta(\frac{n}{\log n})$. The constants have to be chosen carefully in order to satisfy the constraints. We will choose constants $\alpha, \beta$ and $\eta$ such that $s = \beta\log n$, $r = \alpha\cdot n/\log n$ and $m= \eta \ell$.
First choose $\eta$ to be any small constant $> 0$ (for instance $\eta= 1/4$). Now,  choose $\alpha$ to be a constant much larger than $\log\frac{1}{1-\eta}$. This makes sure that $r\log r$ dominates $(n-r)\log\frac{\ell}{\ell-m}$. Recall that $\alpha$ can be chosen to be any large constant by choosing $\phi$ to be an appropriately large constant (by the constraint between $r$ and $\phi$ in the fifth bullet). Notice that this sets $m$ to be a small constant factor of $N$. Fix these choices of $\eta$ and $\alpha$. Now, we choose the term $\beta$ to be a small positive constant such that $rs\log\frac{1-\eta}{\eta}$ and $rs\log\frac{N-m}{m}$ are much less than $r\log r$. Observe that this choice of parameters satisfies all the constraints imposed in the calculations above, and the top fan-in is at least $2^{\Omega(r\log r)} = 2^{\Omega(n)}$.  
\end{proof}


\section{Random Restrictions}~\label{sec:rand-res}

In this section, we will describe our random restriction algorithm and analyze the effect of random restrictions on $\spsp$ circuits as well as the $NW_d$ polynomial. 

Let $n = 2^k$. We identify elements of $[n]$ with elements of $\F_{2^k}$. We view $\F_{2^k}$ as a $k$-dimensional vector space over $\F_2$. Let $\phi:  \F_{2^k} \to \F_2^k$ be an $\F_2$-linear isomorphism between $\F_{2^k}$ and $\F_2^k$.
Thus $\phi(\alpha + \beta) = \phi(\alpha) + \phi(\beta)$. Let  $M: \F_{2^k} \to \F_2^{k\times k}$, map $\alpha \in  \F_{2^k}$ to the matrix $M(\alpha)$, which represents the linear transformation over $\F_2^k$
that is given by multiplication by $\alpha$ in $\F_{2^k}$.  Thus it follows that $M(\alpha \times \beta) = M(\alpha) \times M(\beta)$, and $M(\alpha + \beta) = M(\alpha) + M(\beta)$. 
Moreover it is not hard to see that $\phi(\alpha \times \beta) = M(\alpha) \times \phi(\beta)$. 

Since $n = 2^k$, thus  $\F_n \equiv \F_{2^k}$. Let $\F_n[Z]$ denote the space of univariate polynomials over $\F_n$. For $f\in \F_n[Z]$ of degree $\leq d-1$, $f$ is of the form  $\sum_{i = 0}^{d-1} a_i Z^i$, for $a_i \in \F_n$. Thus we can represent $f$ as a vector of coefficients $(a_0,a_1, \ldots a_{d-1})$, and hence view $f$ as an element of $\F_n^d$.  For ease of notation, for $\alpha \in \F_n$ we will let $[\alpha]$ represent $\phi(\alpha)$. 
Also, for $f \in \F_n[Z]$ of degree at most $d-1$, we let $[f] \in \F_2^{kd}$ represent the concatenation of $\phi$ applied to each of the coefficients of $f$. 

Let  $\eval_\alpha$ be the $dk \times k$ matrix obtained by stacking the matrices $M(\alpha^0)$, $M(\alpha^1)$, ..., $M(\alpha^{d-1})$ one below the other. In other words, the first $k$ rows are the rows of 
$M(\alpha^0)$, the second $k$ rows are the rows of $M(\alpha^1)$ and so on.  The following claim follows easily from the definitions. 

\begin{claim}
Let $f \in  \F_n[Z]$ be of degree at most $d-1$, and let $\alpha \in \F_n$. Then $$[f(\alpha)] = [f] \times \eval_\alpha.$$
\end{claim}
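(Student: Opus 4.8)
The claim to be proved is that for $f \in \F_n[Z]$ of degree at most $d-1$ and $\alpha \in \F_n$, we have $[f(\alpha)] = [f] \times \eval_\alpha$. The plan is to simply unwind the definitions and use the linearity of $\phi$ together with the multiplicativity property $\phi(\alpha\times\beta) = M(\alpha)\times\phi(\beta)$.

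First I would write $f = \sum_{i=0}^{d-1} a_i Z^i$, so that $f(\alpha) = \sum_{i=0}^{d-1} a_i \alpha^i$, where each $a_i \alpha^i \in \F_n$. Applying $\phi$ and using its $\F_2$-linearity, we get $[f(\alpha)] = \phi\!\left(\sum_{i=0}^{d-1} a_i \alpha^i\right) = \sum_{i=0}^{d-1} \phi(a_i \alpha^i)$. Now for each term, I would apply the identity $\phi(\beta\times\gamma) = M(\beta)\times\phi(\gamma)$ with $\beta = a_i$ and $\gamma = \alpha^i$ — or rather, since we want the coefficients $\phi(a_i) = [a_i]$ to appear as the row-vectors being multiplied, I would instead note $\phi(a_i \alpha^i) = M(\alpha^i)^{\!} $ acting appropriately; more precisely, thinking of $[a_i]$ as a row vector and $M(\alpha^i)$ as a $k\times k$ matrix, multiplication by $\alpha^i$ in $\F_{2^k}$ corresponds under $\phi$ to $[a_i]\times M(\alpha^i)$ (using the row-vector convention consistent with how $\eval_\alpha$ is defined by stacking the $M(\alpha^i)$ and how $[f]\times\eval_\alpha$ is formed). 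Hence $[f(\alpha)] = \sum_{i=0}^{d-1} [a_i]\times M(\alpha^i)$.

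Finally I would observe that $[f]$ is by definition the concatenation $([a_0], [a_1], \ldots, [a_{d-1}]) \in \F_2^{kd}$, and $\eval_\alpha$ is the $dk\times k$ matrix obtained by stacking $M(\alpha^0), M(\alpha^1), \ldots, M(\alpha^{d-1})$ vertically. By the block structure of matrix multiplication, the product $[f]\times\eval_\alpha$ is exactly $\sum_{i=0}^{d-1} [a_i]\times M(\alpha^i)$, which equals $[f(\alpha)]$ as computed above. This completes the proof.

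The only genuine subtlety — and the one point I would be careful about — is bookkeeping of the row/column conventions: whether $\phi$-images are treated as row or column vectors and whether multiplication by $\alpha$ corresponds to left or right multiplication by $M(\alpha)$. The statement $\phi(\alpha\times\beta) = M(\alpha)\times\phi(\beta)$ in the excerpt uses column vectors with left multiplication, whereas the definitions of $\eval_\alpha$ (stacking blocks below one another) and of $[f]\times\eval_\alpha$ (with $[f]$ a length-$kd$ vector on the left) force the transposed convention. So in the write-up I would either consistently transpose everything or explicitly note that the relevant identity in the chosen convention is $\phi(\alpha\times\beta) = \phi(\beta)\times M(\alpha)$ (row vectors); once the convention is pinned down, the computation is a one-line consequence of linearity and block multiplication, so there is no real obstacle beyond this notational care.
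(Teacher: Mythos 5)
Your proof is correct and is exactly the definition-unwinding argument the paper intends; the paper itself offers no proof beyond asserting the claim "follows easily from the definitions." Your remark about the row-vector versus column-vector convention (reconciling $\phi(\alpha\times\beta)=M(\alpha)\times\phi(\beta)$ with the product $[f]\times \eval_\alpha$) is a fair observation about a notational looseness in the paper, and you resolve it appropriately.
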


In the rest of the discussion we will identify the elements of $\F_n$ with $\{1, 2, \ldots, n\}$. 
Let $\overline {\eval_i}$ be the  $dk \times 2^k$ matrix obtained by adding a column for each of the $2^k$ linear combinations of the columns of $\eval_i$. 
Let $\eval$ be the $dk \times nk$ matrix obtained by concatenating $\eval_{i}$ for all $i \in [n]$. 
Let $\overline {\eval}$ be the $dk \times n2^k$ matrix obtained by concatenating  $\overline {\eval_{i}}$ for all $i \in [n]$. 

In order to restrict the variables in the circuit, we  will first ``randomly restrict" the space of polynomials in $\F_n[Z]$ of degree at most $d-1$. We present the random restriction procedure in the next section.

\subsection{Random Restriction Algorithm}\label{sec:rralgorithm}

Let $\epsilon > 0$ be any constant. We will define a randomized procedure $R_\epsilon$ which selects a subset of the variables $\{x_{i,j}\mid i,j \in [n]\}$ to set to zero. 

The restriction proceeds by first restricting the space of polynomials $f \in \F_n[Z]$ of degree at most $d-1$. This restriction then naturally induces a restriction on the space of variables by selecting only those variables $x_{i,j}$ such that there is some polynomial $f$ in the restricted space for which $f(i) = j$. 

We restrict the space of polynomials by iteratively restricting the values the polynomials can take at points in $\F_{2^k}$. For each $i \in F_{2^k}$, we restrict the values $f$ can take at $i$ to a random affine subspace of codimension $\epsilon k$ (when we view $\F_{2^k}$ as a $k$ dimensional vector space over $\F_2$). We do this by sampling $\epsilon k$ random and independent columns from $\overline{\eval_i}$ and restricting the inner product of $[f]$ with these columns to be randomly chosen values. Each column that we pick in this manner imposes an $\F_2$-affine constraint on $[f]$, and restricts $[f]$ to vary in an affine subspace of codimension $1$. Since these random constraints for the various values of $i$ might not be linearly independent, it is possible that at the end of the process no polynomial $f$ satisfies the constraints. Thus we need to be more careful. We iteratively impose these random constraints for various values of $i$, but at the same time ensure that each new constraint that is imposed on $f$ is linearly independent of the old constraints. We do this by making sure that each new column that is sampled is linearly independent of the old columns.

\vspace{5mm}
\noindent{\bf Random restriction procedure $R_\epsilon$}\\
\noindent{\bf Output:} The set of variables that are set to zero. 
\begin{enumerate}
\item Initialize $A_0 = \F_2^{kd}$, $\cal B$ to be a $0$ dimensional vector, $\cal M$ to be an empty matrix over $\F_2$.
\item {\bf Outer Loop : }For $i$ from $1$ to $n$, do the following:
		\begin{itemize}
		 \item {\bf Inner Loop : }For $j$ going from $1$ to $\epsilon k$, do the following:
		 	\begin{enumerate}
			\item If all the columns of $\overline {\eval_{i}}$ have been spanned by the columns in $\cal M$, then do nothing
			\item Else pick a uniformly random column $C$ of $\overline {\eval_{i}}$ that has not been spanned by the columns of $\cal M$,
				 and pick a uniformly random element $b$ of $\F_2$.  	
			\item Set ${\cal M} = {\cal M} \| C $ (appending $C$ as a new column of {\cal M}) and set ${\cal B} = {\cal B} \| b$ (appending $b$ to the vector ${\cal B}$.  
			\end{enumerate}
		\item Set $A_i = \{[f] \mid [f] \times {\cal M} = {\cal B}; [f] \in \F_2^{kd}\}$  
		\end{itemize}
\item Let $S_0 = \{x_{i,j} \mid  j \neq f(i) ~\forall ~ [f] \in A_n\}$. Set all the variables $x_{i,j}\in S_0$ to $0$.
\end{enumerate}
\vspace{5mm}

The above random restriction procedure imposes at most $\epsilon k \times n$ independent $\F_2$-affine constraints on $[f]$. Each constraint restricts the space of possible $[f]$ by codimension $1$. Thus in the end $A_n$ is an affine subspace of $\F_2^{kd}$ of codimension at most $\epsilon k \times n$.  This immediately implies the claim below which shows that the size of $A_n$ is large. This in turn will imply that many of the monomials in $NW_d$ will survive after the random restriction. 

\begin{claim}\label{claim:anlarge}
$|A_n| \geq n^d/2^{\epsilon kn} = n^{d-\epsilon n}. $
\end{claim}
\begin{proof}
The main observation is that each time we are in Step (b) of the inner loop, we impose an {\it independent} $\F_2$-affine constraint on the possible choices of $[f]$. Thus the space of possible $[f]$ reduces by codimension exactly $1$. Thus we never impose conflicting constraints on $[f]$ and we ensure that at each step the number of $[f]$ satisfying all constraints is large. 
\end{proof}

\subsection{Effect of random restriction on $NW_d$}
Let $S_0$ be the set of variables output by the random restriction procedure $R_\epsilon$.  Let $R_\epsilon(NW_d)$ be the polynomial obtained from $NW_d$ after setting the variables in $S_0$ to $0$. In this section we will show that $R_\epsilon(NW_d)$ continues to remain hard in some sense. More precisely,  we will show that for any $S_0$ output by the $R_\epsilon$, and for $r<d$, a lot of distinct $r^{th}$ order partial derivatives of $R_\epsilon(NW_d)$ are non zero. 

Let $r< d-1$. Let $S \subset [n]$ be a set of size $r$. Let $T_S = \{\prod_{i \in S} x_{i,j_i} \mid  (j_i)_{i \in S} \in [n]^r \}$ be a set of $n^r$ monomials. We will consider partial derivatives of $NW_d$ with respect to monomials in $T_S$ for some choice of $S$. 

\begin{lem}[Random restriction on $NW_d$]~\label{lem:many-derivatives}
For every $\epsilon > 0$, and every set $S_0$ output by the random restriction procedure $R_\epsilon$, there is a set $S \subset [n]$ of size $r$ such that at least  $n^{r(1-\epsilon n/d)}$ monomials in $T_S$ are  such that the partial derivative of $R_\epsilon(NW_d)$ with respect to each of these monomials is nonzero and distinct.
\end{lem}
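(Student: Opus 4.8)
The plan is to translate the statement into a question about coordinate projections of a Reed--Solomon code; all dimensions below are over $\F_2$. First I would record what the restriction does: if $S_0$ is the output of $R_\epsilon$, then $R_\epsilon(NW_d)=\sum_{g\in F}m_g$, where $F$ is the set of $g\in\F_n[Z]$ of degree at most $d-1$ all of whose variables $x_{i,g(i)}$ avoid $S_0$; by the very definition of $S_0$ we have $A_n\subseteq F$, and Claim~\ref{claim:anlarge} gives $|A_n|\ge n^{d-\epsilon n}$. Identifying a polynomial of degree at most $d-1$ with its coefficient vector, the evaluation map $\mathrm{ev}\colon f\mapsto(f(1),\dots,f(n))$ is $\F_2$-linear and injective (Vandermonde, using $d\le n$), and its image is the Reed--Solomon code $\mathcal C\subseteq\F_n^n$, which is MDS: for every $T\subseteq[n]$ with $|T|=d$ the coordinate projection onto $T$ is injective on $\mathcal C$. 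Set $V=\mathrm{ev}(A_n)$ and let $V_0\subseteq\mathcal C$ be its linear part; since $A_n$ is an $\F_2$-affine subspace, $V_0$ is an $\F_2$-subspace with $\dim V_0=\dim A_n\ge kd-\epsilon kn$, where $n=2^k$.

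Next I would extract a good set $S$. Fix any $d$-subset $T$ of $[n]$ and put $W_0=\pi_T(V_0)$, regarded as a subspace of $\F_2^{kd}$ with its natural partition into $d$ blocks of $k$ coordinates (one block per element of $T$); since projection onto $T$ is injective on $\mathcal C\supseteq V_0$ we have $\dim W_0=\dim V_0\ge kd-\epsilon kn$. The set function $S\mapsto\dim\pi_S(W_0)$ on subsets of the $d$ blocks is monotone and submodular, being a vector-matroid rank function evaluated on (disjoint) unions of blocks; hence, if $S$ is built greedily by always adjoining the block of largest marginal gain, the successive gains satisfy $g_1\ge g_2\ge\cdots\ge g_d\ge 0$ with $\sum_{j=1}^d g_j=\dim W_0$, so the first $r$ blocks --- which we take to be $S$ --- satisfy $\dim\pi_S(W_0)=\sum_{j=1}^r g_j\ge\frac rd\sum_{j=1}^d g_j\ge\frac rd(kd-\epsilon kn)=kr\bigl(1-\tfrac{\epsilon n}{d}\bigr)$. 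Since $S\subseteq T$, $\pi_S(V_0)=\pi_S(W_0)$, so the number of profiles $(g(i))_{i\in S}$ realised by some $g\in A_n$ equals $|\pi_S(V)|=|\pi_S(V_0)|=2^{\dim\pi_S(V_0)}\ge n^{r(1-\epsilon n/d)}$.

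Finally I would turn each realised profile into a surviving partial derivative. For such a profile $\sigma$ put $\theta_\sigma=\prod_{i\in S}x_{i,\sigma_i}\in T_S$ and fix $f\in A_n$ with $f|_S=\sigma$; then $m_f$ survives the restriction (as $f\in A_n\subseteq F$) and is divisible by $\theta_\sigma$, so $\partial_{\theta_\sigma}(R_\epsilon(NW_d))=\sum_{g\in F,\;g|_S=\sigma}\prod_{i\notin S}x_{i,g(i)}$ is a nonempty sum of pairwise distinct multilinear monomials --- distinct because a polynomial of degree at most $d-1<n$ is determined by its values at all $n$ points --- hence a nonzero polynomial over every field. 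For $\sigma\ne\sigma'$, any monomial common to $\partial_{\theta_\sigma}(R_\epsilon(NW_d))$ and $\partial_{\theta_{\sigma'}}(R_\epsilon(NW_d))$ would force polynomials $g,g'$ of degree at most $d-1$ with $g|_S=\sigma$, $g'|_S=\sigma'$ that agree on the $n-r\ge d$ coordinates outside $S$, hence $g=g'$ and $\sigma=\sigma'$, a contradiction; so distinct profiles give monomial-disjoint, in particular distinct, derivatives. This exhibits at least $n^{r(1-\epsilon n/d)}$ monomials of $T_S$ with the required property.

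The step I expect to be the crux is the middle one: getting the fraction $r/d$ --- rather than the weaker $r/n$ that comes out of running the greedy argument over all $n$ coordinates --- in the dimension bound for $\pi_S(V_0)$. This is exactly where the MDS/minimum-distance property of the Reed--Solomon code enters: it lets us collapse $V_0$ losslessly onto a $d$-element coordinate set before invoking submodularity. The rest is careful bookkeeping of $\F_2$-dimensions while moving between the $\F_n$-structure carried by $NW_d$ and the $\F_2$-structure in which $A_n$ and the constraints imposed by $R_\epsilon$ are defined.
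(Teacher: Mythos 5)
Your proof is correct, and its skeleton matches the paper's: both arguments rest on the fact that a degree-$(d-1)$ polynomial is determined by its values on $d$ points, so the $\ge n^{d-\epsilon n}$ polynomials of $A_n$ (Claim~\ref{claim:anlarge}) must exhibit at least $n^{r(1-\epsilon n/d)}$ distinct evaluation profiles on \emph{some} $r$-subset of a $d$-point set, and each realised profile yields a nonzero derivative (via a surviving extension $m_f$, $f\in A_n$) with distinctness following from the large pairwise distance of the surviving monomials. Where you diverge is in how that good $r$-subset is extracted. The paper partitions $[d]$ into $d/r$ consecutive blocks $S_1,\dots,S_{d/r}$, notes that the number of distinct $d$-tuples of evaluations is at most $\prod_i m_i$ where $m_i$ counts profiles on $S_i$, and pigeonholes to get some $m_i\ge |A_n|^{r/d}$; this uses nothing about $A_n$ beyond its cardinality and is a one-line counting argument (it does implicitly assume $r\mid d$). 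You instead exploit the affine structure of $A_n$: you pass to the $\F_2$-linear part, project losslessly onto $d$ coordinates via the MDS property of the Reed--Solomon code, and run a greedy/submodularity argument on $S\mapsto\dim\pi_S(W_0)$ to find $r$ blocks capturing an $r/d$ fraction of the dimension. Your submodularity claim checks out ($f(S\cup e)-f(S)=\dim\pi_e(W_0\cap\ker\pi_S)$ is monotone decreasing in $S$), and your route buys a little extra --- the freedom to choose an arbitrary $r$-subset rather than a block of a fixed partition, and no divisibility hypothesis --- at the cost of more machinery and of needing $A_n$ to be an affine subspace rather than an arbitrary large set. One cosmetic slip: the monomials $\prod_{i\notin S}x_{i,g(i)}$ are pairwise distinct because $g$ is determined by its values on the $n-r\ge d$ points \emph{outside} $S$, not "at all $n$ points''; your cross-profile disjointness argument states this correctly.
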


\begin{proof}
Observe that for any polynomial of degree at most $d-1$, its evaluation at some $d$ distinct points uniquely determines it. Let $S_i \in [n]$ be the set $\{(i-1)r +1, (i-1)r +2, \ldots, ir\}$. We will consider the set of evaluations of $f$ such that $[f] \in A_n$ at points of the set $S_i$ for various $i$. We will show that for some choice of $i$, the number of distinct sets of evaluations in $S_i$ as $[f]$ ranges in $A_n$ is large. 
Let $m_i$ be the number of distinct $r$-tuples of evaluations on $S_i$  as $[f]$ varies in $A_n$. Thus the total number of distinct $d$-tuples of evaluations on $[d]$ as  $[f]$ varies in $A_n$ is at most $\prod_{i=1}^{d/r} m_i$. However each $d$-tuple of evaluations on $[d]$ uniquely identifies $[f] \in A_n$. Thus $|A_n| \leq \prod_{i=1}^{d/r} m_i$. However by Claim~\ref{claim:anlarge} we know that $|A_n| \geq n^d/2^{\epsilon kn} = n^{d-\epsilon n}$. 
Thus there exists $i \leq d/r$ such that $m_i \geq n^{r(1 - \epsilon n/d)}$. Thus there are $n^{r(1 - \epsilon n/d)}$ monomials in $T_{S_i}$ each of which is consistent with some polynomial $f$ such that $[f] \in A_n$. Thus for each such monomial, there exists a monomial in $R_\epsilon(NW_d)$ extending it, and hence the corresponding partial derivative is nonzero. From Remark~\ref{rmk2} it follows that each of these partial derivative is distinct.
\end{proof}

\subsection{Effect of random restriction on $\spsp$ circuit}
Let $C$ be a homogeneous $\spsp$ circuit of size at most $n^{\rho\log\log n}$ for some very small constant $\rho$ that we will choose later. We will use $R_\epsilon(C)$ to refer to the $\spsp$ circuit obtained from $C$ after setting the variables in $S_0$ to $0$. This operation simply eliminates those monomials computed at the bottom later of $C$ which contain at least one variable which is set to $0$. Observe that homogeneity is preserved in this process. We will now show that with very high probability over the random restrictions, no product gate in $C$ at the bottom layer which takes more than $\Omega(\log n)$ distinct variables as input survives.

\begin{lem}[Random restriction on $\spsp$ circuit]~\label{lem:ckt-simplifies}
Let $\epsilon > 0$ and $\beta > 0$ be constants. Then there exists $\rho > 0$ such that if $C$ is a $\spsp$ circuit of size at most $n^{\rho\log\log n}$, then with probability $> 9/10$, all the monomials computed at the bottom layer which have support at least $\beta\log n$ have some variable set to $0$ by $R_\epsilon$. 
\end{lem}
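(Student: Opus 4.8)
The plan is to bound, for a single bottom-layer monomial $\mu$ of support $\geq \beta \log n$, the probability that $\mu$ survives $R_\epsilon$ (i.e. that none of its variables lands in $S_0$), show this probability is at most $n^{-c\beta \log n}$ for some constant $c = c(\epsilon) > 0$, and then finish with a union bound over the at most $n^{\rho \log\log n} \cdot n = \poly(\text{size})$ bottom monomials in $C$. Since $C$ has size at most $n^{\rho \log \log n}$, the number of distinct monomials it can compute at the bottom layer is at most $n^{2\rho\log\log n}$ (say), which is $n^{o(\log n)}$, so choosing $\rho$ small compared to $c\beta$ makes the union bound beat $1/10$. The heart of the matter is therefore the single-monomial survival estimate.

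First I would reduce survival of $\mu$ to a statement about the affine space $A_n$ produced by $R_\epsilon$. A variable $x_{i,j}$ is \emph{not} in $S_0$ precisely when there is some $[f] \in A_n$ with $f(i) = j$; equivalently, the value $j$ lies in the projection of $A_n$ onto the ``evaluation at $i$'' coordinate, which (by the Claim relating $[f(\alpha)]$ to $[f]\times\eval_\alpha$) is an affine subspace $V_i \subseteq \F_2^k$ of codimension equal to the number of columns of $\overline{\eval_i}$ that got sampled into $\cal M$ during the run. For $\mu$ to survive, every variable $x_{i,f(i)}$ appearing in $\mu$ must have its $j$-value in the corresponding $V_i$. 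The key point is that during the outer loop at stage $i$, as long as not all columns of $\overline{\eval_i}$ are already spanned, we sample $\epsilon k$ fresh columns, so $V_i$ has codimension $\epsilon k$ — hence $|V_i|/2^k = 2^{-\epsilon k} = n^{-\epsilon}$ — and the only way codimension could be smaller is if many columns of $\overline{\eval_i}$ were already in the span of $\cal M$, which we must rule out or work around. This is where I expect the main obstacle to be: controlling the ``degenerate'' case where the running matrix $\cal M$ already spans a large chunk of $\overline{\eval_i}$'s column space before stage $i$ begins.

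To handle that, I would argue that before stage $i$ the matrix $\cal M$ has at most $\epsilon k (i-1) \leq \epsilon k n \leq \epsilon k \cdot 2^k$ columns, so its column span has dimension at most $\epsilon k n$, which for $d$ a constant fraction of $n$ and $k = \log n$ is much smaller than $dk$ — but I actually need that it does not span too much of the specific column space of $\overline{\eval_i}$. The columns of $\eval_i$ are the rows of $M(i^0),\ldots,M(i^{d-1})$ viewed appropriately; since the $i^t$ are distinct field elements for distinct powers (as $i$ ranges and $d \leq n$), these column spaces have a Vandermonde-type independence, and I would show the column space of $\eval_i$ has full dimension $k$ for each $i$, with the various $\eval_i$ contributing ``independent enough'' directions that at stage $i$ a random fresh column of $\overline{\eval_i}$ is, with good probability, not yet spanned. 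Concretely: even in the worst case the inner loop at stage $i$ picks $\min(\epsilon k,\; k - \dim(\text{already-spanned part of }\eval_i\text{'s column space}))$ columns, and I want to show that across all stages, for all but a small fraction of $i$, this equals $\epsilon k$, or more robustly, that $\sum_i \text{codim}(V_i)$ is large enough. Then for the fixed support set $T$ of $\mu$ (with $|T| \geq \beta \log n$), the events ``$f(i) \in V_i$'' for $i \in T$ are, conditioned on the run, independent once the $V_i$ are fixed (because $\mu$ pins a specific $j$-value $f(i)$ in each coordinate $i$, and membership of that value in the affine space $V_i$ is determined), giving survival probability $\prod_{i \in T} |V_i|/2^k \leq (n^{-\epsilon})^{|T|} \cdot (\text{slack})$; taking expectations over the randomness of $R_\epsilon$ and absorbing the degenerate-stage slack into a constant loss in the exponent yields survival probability at most $n^{-\epsilon' \beta \log n}$ for some $\epsilon' > 0$. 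I would be careful to note that homogeneity of $R_\epsilon(C)$ is preserved trivially (zeroing variables kills monomials wholesale), and that the union bound only needs to range over \emph{distinct} bottom monomials, of which there are at most the size of $C$, completing the proof upon choosing $\rho < \epsilon'\beta$.
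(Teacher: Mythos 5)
There is a genuine gap, and it sits exactly at the step you flag as the ``heart of the matter'': the claim that a single bottom monomial of support $\geq \beta\log n$ survives with probability at most $n^{-c\beta\log n}$. Your calculation $\prod_{i\in T}|V_i|/2^k \leq (n^{-\epsilon})^{|T|}$ silently treats $T$ as a set of $\geq \beta\log n$ distinct \emph{rows}, but the hypothesis only gives $\geq \beta\log n$ distinct \emph{variables}, which may all lie in one row. For a monomial $x_{i,j_1}\cdots x_{i,j_t}$ concentrated in a single row $i$, survival is the single event that all of $j_1,\dots,j_t$ lie in the one affine subspace $V_i$ of codimension $\epsilon k$; since the $t$ points need only contain about $\log t$ linearly independent elements of $\F_2^k$, Proposition~\ref{lem:random-subspace-inclusion} gives a survival probability as large as roughly $2^{-\epsilon k\log t}=n^{-\epsilon\log t}$. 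For $t=\beta\log n$ this is $n^{-O(\epsilon\log\log n)}$ --- quasipolynomially, not $n^{-c\log n}$, small. This is not a technicality: it is precisely why the paper's union bound only tolerates circuits of size $n^{\rho\log\log n}$ and why the final lower bound is $n^{\Omega(\log\log n)}$ rather than $n^{\Omega(\log n)}$. If your claimed bound were correct, the same argument would give a superpolynomial lower bound against circuits of size $n^{o(\log n)}$, which the paper does not (and cannot, by this method) achieve.

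The fix is the case analysis the paper carries out. Either some row contains at least $\frac{(\beta/100)\log n}{\log\log n}$ of the monomial's variables, in which case Lemma~\ref{lem:flat-monomial-die} (the ``$\log t$ independent elements in one subspace'' argument) kills the monomial except with probability $n^{-\epsilon\Omega(\log\log n)}$; or the support spreads over $\Omega(\log\log n)$ distinct rows, in which case one gains a factor $n^{-\epsilon}$ (non-compact rows, Lemma~\ref{lem:tall-monomial-noncompact-die}) or $n^{-1}$ (compact rows, Lemma~\ref{lem:tall-monomial-compact-die}) per row. Note also that your treatment of the ``degenerate'' stages is inverted relative to the paper's: when $\cal M$ already spans all of $\eval_i$ (a compact row), the situation is not a loss to be ruled out but a gain --- $f(i)$ is then a single uniformly random value, so each variable in that row survives with probability exactly $1/n$. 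Finally, the cross-row independence you assert needs justification: the paper obtains it by sequential conditioning for non-compact rows and by linear independence of the column spaces $\eval_{i_1},\dots,\eval_{i_t}$ (for $t<d$) for compact rows; ``conditioned on the run the events are independent'' is not meaningful, since conditioned on the full run the events are deterministic.
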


Before we prove this lemma, we will first prove some simple results about affine subspaces and the probabilities of variables surviving the random restriction process.

\begin{prop}~\label{lem:random-subspace-inclusion}
Let $V$ and $W$ be fixed subspaces of $\F^k_2$ such that $W$ is a subspace of $V$. Let $U$ be a subspace of $V$ which is chosen uniformly at random among all subspaces of $V$ of dimension $\dim(U)$. Then, the probability that $W$ is a subspace of $U$ is at most $\prod_{j = 0}^{(\dim(W)-1)}\frac{2^{\dim(U)} - 2^j}{2^{\dim(V)} - 2^j} \leq 2^{-(\dim(V)-\dim(U))\dim(W)}$.
\end{prop}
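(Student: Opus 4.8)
The plan is to prove Proposition~\ref{lem:random-subspace-inclusion} by a direct counting argument: express the probability as a ratio of the number of subspaces $U$ of $V$ of the prescribed dimension that contain $W$ to the total number of subspaces of $V$ of that dimension, and then estimate this ratio using the standard formula for counting subspaces over $\F_2$ (Gaussian binomial coefficients). Writing $u = \dim(U)$, $v = \dim(V)$, $w = \dim(W)$, the total number of $u$-dimensional subspaces of $V$ is the Gaussian binomial $\binom{v}{u}_2 = \prod_{j=0}^{u-1}\frac{2^v - 2^j}{2^u - 2^j}$. For the numerator, a $u$-dimensional subspace $U$ with $W \subseteq U \subseteq V$ corresponds bijectively to a $(u-w)$-dimensional subspace of the quotient $V/W$, so the count is $\binom{v-w}{u-w}_2 = \prod_{j=0}^{u-w-1}\frac{2^{v-w} - 2^j}{2^{u-w} - 2^j}$.

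First I would set up this ratio cleanly and then simplify. It is arguably cleaner to build $U$ by first choosing an ordered basis extending a fixed basis of $W$: the number of ways to extend a basis of $W$ to an (ordered) linearly independent $u$-tuple inside $V$ is $\prod_{j=w}^{u-1}(2^v - 2^j)$, while the number of ways to extend to an (ordered) linearly independent $u$-tuple inside a random $U \supseteq W$ (conditioned on $W \subseteq U$) needs the right normalization — so I would instead directly count pairs. The slickest route: pick $U$ by choosing a uniformly random $u$-dimensional subspace, equivalently choose an ordered basis $v_1,\dots,v_u$ of a random $u$-space. Condition on $W \subseteq U$. Then I claim $\Pr[W \subseteq U] = \prod_{j=0}^{w-1}\frac{2^u - 2^j}{2^v - 2^j}$; this is the product form stated in the proposition, and it follows by the identity $\binom{v-w}{u-w}_2 / \binom{v}{u}_2 = \prod_{j=0}^{w-1}\frac{2^u - 2^j}{2^v - 2^j}$, which is a routine manipulation of the Gaussian binomial product formulas (the $j$ ranging from $u-w$ to $u-1$ in the denominator of one and from $v-w$ to $v-1$ in the other telescope against the full products).

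For the final inequality, I would bound each factor $\frac{2^u - 2^j}{2^v - 2^j}$ for $0 \le j \le w-1$ by $2^{u-v}$. Indeed $\frac{2^u - 2^j}{2^v - 2^j} \le 2^{u-v}$ is equivalent to $2^u - 2^j \le 2^{u-v}(2^v - 2^j) = 2^u - 2^{u-v+j}$, i.e. $2^{u-v+j} \le 2^j$, which holds since $u \le v$. Multiplying over the $w$ factors gives $\prod_{j=0}^{w-1}\frac{2^u - 2^j}{2^v - 2^j} \le 2^{(u-v)w} = 2^{-(v-u)w} = 2^{-(\dim(V)-\dim(U))\dim(W)}$, as desired.

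The only real obstacle is getting the combinatorial identity $\binom{v-w}{u-w}_2/\binom{v}{u}_2 = \prod_{j=0}^{w-1}\frac{2^u-2^j}{2^v-2^j}$ stated and justified without a messy index chase; I would sidestep this by giving the pair-counting / basis-extension derivation of $\Pr[W \subseteq U]$ directly, which produces exactly the product form in the statement, and everything else (the per-factor bound $\le 2^{u-v}$) is elementary. Note also the hypothesis $W \subseteq V$ is what makes "$W$ is a subspace of $U$" a meaningful event for $U \subseteq V$, and the case $w = 0$ is trivial (empty product $= 1$), so I may assume $w \ge 1$.
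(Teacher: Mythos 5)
Your proposal is correct, and it reaches the same product formula as the paper but by a different mechanism. You compute $\Pr[W \subseteq U]$ exactly as a ratio of subspace counts, using the bijection between $u$-dimensional subspaces of $V$ containing $W$ and $(u-w)$-dimensional subspaces of the quotient $V/W$, i.e.\ $\binom{v-w}{u-w}_2 / \binom{v}{u}_2$, and then simplify to $\prod_{j=0}^{w-1}\frac{2^u-2^j}{2^v-2^j}$. The paper instead keeps the randomness probabilistic: it transfers the randomness from $U$ to $W$ (a random $U \supseteq W$ inside $V$ is equivalent, via a uniformly random isomorphism onto a fixed $Y$ of dimension $u$, to asking whether a uniformly random $w$-dimensional subspace lies inside the fixed $Y$), and then computes that latter probability by sampling the basis vectors of the random subspace greedily one at a time, picking up the factor $\frac{2^u-2^j}{2^v-2^j}$ at step $j$. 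Both routes give the exact value of the probability and use the identical per-factor bound $\frac{2^u-2^j}{2^v-2^j} \le 2^{u-v}$ for $u \le v$. Your version buys a clean closed-form identity and sidesteps the paper's slightly informal step that the image of $W$ under a "uniformly random invertible map" is a uniformly random $w$-dimensional subspace; the paper's version buys freedom from any Gaussian-binomial manipulation, since the product appears factor by factor from the sequential sampling. The one thing to tighten in a final write-up is the middle of your argument, where you start down the ordered-basis-extension route and abandon it; commit to the quotient-space count (which you verify is routine and is indeed correct) and the proof is complete.
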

\begin{proof}
Let us consider $Y$ to be a fixed subspace of dimension $\dim(U)$ of $V$. Now, let $A_U$ be an invertible linear transformation from $U$ to $Y$. Since, $U$ is chosen uniformly at random, so $A_U$ is also a uniformly random invertible matrix. Now, $W$ was a subspace of $U$ if and only if $A_UW$ is a subspace of $Y$. But since $A_U$ is chosen uniformly at random, so $A_UW$ is a uniformly random subspace of $\F^k_2$ of dimension $\dim(W)$. So, the desired probability is the same as the probability that for a fixed subspace $Y$ of dimension $\dim(U)$, a uniformly at random chosen subspace $W$ of dimension $\dim(W)$ lies in $Y$. Observe that sampling a uniformly random subspace can be done by greedily and uniformly at random sampling independent basis vectors for the subspace. Thus $W$ is contained in $Y$ if and only if all of the $\dim(W)$ linearly independent basis vectors chosen while randomly sampling $W$ lie in $Y$.  This quantity is at most $\prod_{j = 0}^{(\dim(W)-1)}\frac{2^{\dim(U)} - 2^j}{2^{\dim(V)} - 2^j}$.  Since, $\dim(U) \leq \dim(V)$, this probability is upper bounded by $2^{-(\dim(V)-\dim(U))\dim(W)}$.
\end{proof}

We will now visualize our variables to be arranged in an $n\times n$ {\it variable matrix}, where the $(i,j)^{th}$ entry of this matrix is the variable $x_{i,j}$. We say that a monomial {\it survives} the random restriction procedure given by $R_\epsilon$ if no variable in the monomial is set to zero. 

\begin{define}[Compact row]
We say that the $i^{th}$ row  in the variable matrix is {\it compact} if the columns of $\cal M$ sampled by the random restriction algorithm span every column of $\eval_{i}$. Thus $\cal M$ and $\cal B$ uniquely determine the value of $f(\alpha_i)$.  We say a row is non-compact otherwise. 
\end{define}

\begin{prop}~\label{lem:compact}
Suppose that the $i^{th}$ row of the variable matrix is compact. Then, for every $j \in \F_n$, the probability that a variable $x_{i,j}$ survives $R_\epsilon$ is at most $\frac{1}{n}$.
\end{prop}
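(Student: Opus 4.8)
The plan is to track exactly what happens to the $i^{th}$ row of the variable matrix during the run of $R_\epsilon$, conditioned on the (fixed) event that this row is compact. By definition, compactness means that at the end of the algorithm the columns collected in $\cal M$ span all columns of $\eval_i$, equivalently the full column space of $\eval_i$ is contained in the span of $\cal M$. Recall from the earlier claim that $[f(\alpha_i)] = [f]\times \eval_i$, so once every column of $\eval_i$ is a linear combination of columns of $\cal M$, the affine constraints $[f]\times{\cal M} = {\cal B}$ already pin down the vector $[f]\times\eval_i$, i.e. they determine $\phi(f(\alpha_i))$ and hence $f(\alpha_i)$ uniquely. The point to establish is that, over the internal randomness of $R_\epsilon$ (the random choices of columns $C$ and the random bits $b$), this forced value $f(\alpha_i) = j^\star$ is essentially uniform over $\F_n$ — and in particular lands on any fixed $j$ with probability at most $1/n$. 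Then $x_{i,j}$ survives only if $f(i) = j$ is consistent with the restricted space, i.e. (by compactness) only if $j^\star = j$, and the bound follows.

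First I would make precise the mechanism by which $f(\alpha_i)$ is determined. Consider the first moment during the run at which the columns of $\cal M$ come to span the column space of $\eval_i$ (this moment exists precisely because the row is compact). Up to that moment, consider the last $k$ genuinely new columns that were appended that completed a spanning set for $\eval_i$'s $k$-dimensional column space — more carefully, I would argue that among the columns of $\cal M$ one can extract $k$ of them whose span equals the column space of $\eval_i$, and that the corresponding entries of $\cal B$ were each chosen as independent uniform bits in $\F_2$ (Step (b) of the inner loop picks $b$ uniformly and independently each time a new column is appended). These $k$ independent uniform bits are (an invertible linear image of) the coordinates of $\phi(f(\alpha_i)) = [f]\times\eval_i$ in the chosen basis, so $\phi(f(\alpha_i))$ is uniformly distributed over $\F_2^k$, hence $f(\alpha_i)$ is uniform over $\F_n$. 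Therefore $\prob[f(\alpha_i) = j] = 1/n$ for every $j$, which gives the bound (with equality, in fact, so ``at most $1/n$'' is safe).

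One subtlety to handle carefully: the event ``row $i$ is compact'' is itself determined by the random column choices, so I must be sure I am not conditioning away the uniformity of the relevant bits $b$. The clean way around this is to observe that the bits $b$ are chosen independently of (and after, or in any case decoupled from) the column choices $C$ — the algorithm first picks which column to append, then picks a fresh uniform bit — so conditioning on the entire sequence of column choices (which fixes whether row $i$ is compact and fixes which $k$ columns span $\eval_i$'s column space) leaves the corresponding bits of $\cal B$ still uniform and independent. Hence conditioned on compactness of row $i$, the value $f(\alpha_i)$ is still uniform over $\F_n$. The main obstacle is purely this bookkeeping — isolating a spanning subset of $k$ columns of $\cal M$ inside the column space of $\eval_i$ and verifying that the associated constraint values are fresh independent uniform bits — rather than anything deep; once that is set up, the $1/n$ bound is immediate since the surviving variable $x_{i,j}$ forces $f(\alpha_i)=j$.
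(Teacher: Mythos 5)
Your proposal follows essentially the same route as the paper: compactness means $\mathcal{B}$ determines $[f]\times \eval_i$ and hence $f(i)$ uniquely, and the uniformity and independence of the bits $b$ (decoupled from the column choices, which are what determine compactness) make this forced value uniform over $\F_n$, giving probability exactly $1/n$. The one inaccuracy is in your bookkeeping step: it is not true in general that one can extract $k$ columns of $\mathcal{M}$ whose span \emph{equals} the column space of $\eval_i$ — the columns of $\mathcal{M}$ come from various $\overline{\eval_{i'}}$ and need not individually lie in that column space, even though they collectively span it. The correct statement is that, writing each of the $k$ linearly independent columns of $\eval_i$ as a combination of the (linearly independent) columns of $\mathcal{M}$, the resulting $k\times |\mathcal{M}|$ coefficient matrix has full rank $k$, so the linear map $\mathcal{B}\mapsto [f]\times\eval_i$ is surjective and pushes the uniform distribution on $\mathcal{B}$ forward to the uniform distribution on $\F_2^k$. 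This is exactly what the paper's proof invokes (tersely) via the linear independence of the columns of $\eval_i$, and with this substitution your argument goes through; your explicit treatment of the conditioning on compactness is a point the paper glosses over.
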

\begin{proof}
The columns of $\cal M$ sampled by the random restriction algorithm span every column of $\eval_{i}$, so the value of $\cal B$ uniquely determines the value of $[f]\times \eval_{i}$. Moreover, since the columns of $\eval_{i}$ are linearly independent (since for every $j \in [n]$, there exists an $f$ such that $f(i) = j$) and $\cal B$ is chosen uniformly at random, so the value of $[f]\times \eval_{i}$ is a uniformly random element of $\F_2^k$. This implies that the value of $f(i)$ is uniquely determined and is a uniformly random element of $\F_n$. Thus the probability that $f(i) = j$ equals $1/n$, and the result follows. 
\end{proof}

\begin{prop}~\label{lem:noncompact}
Suppose that the $i^{th}$ row of the variable matrix is non-compact. Then, for every $j \in \{1, 2, \ldots, n\}$, the probability that $x_{i,j}$ survives is at most $\frac{1}{n^{\epsilon}}$. In fact this holds even after conditioning on any choice of $A_{i-1}$, which is the affine subspace $[f]$ is allowed to vary in after $i-1$ stages on the random restriction algorithm. 
\end{prop}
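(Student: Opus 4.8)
The plan is to reduce the event ``$x_{i,j}$ survives'' to a statement about the $\epsilon k$ $\F_2$-affine constraints that the inner loop of $R_\epsilon$ imposes while processing row $i$, and then to observe that each such constraint involves a fresh uniformly random bit of ${\cal B}$, so that all $\epsilon k$ of them are satisfied simultaneously with probability only $2^{-\epsilon k} = n^{-\epsilon}$, regardless of what the algorithm does before stage $i$ (that is what the conditioning on $A_{i-1}$ allows) or after it.

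First I would record the consequence of non-compactness. Since columns are only ever appended to ${\cal M}$, its column span is non-decreasing over the run of $R_\epsilon$; hence if the \emph{final} ${\cal M}$ fails to span all columns of $\eval_i$ (equivalently, of $\overline{\eval_i}$), then it fails to span them at \emph{every} intermediate stage. So during all $\epsilon k$ iterations of the inner loop for this value of $i$ the test in step (a) is false, step (b) fires, and the algorithm samples fresh columns $C_1,\dots,C_{\epsilon k}$ of $\overline{\eval_i}$ together with fresh uniform bits $b_1,\dots,b_{\epsilon k}\in\F_2$, setting $A_i = A_{i-1}\cap\{[f] : [f]\times C_t = b_t,\ t\in[\epsilon k]\}$. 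Writing $C_t = \eval_i\,v_t$ for the unique vector $v_t\in\F_2^k$ (uniqueness holds because the columns of $\eval_i$ are linearly independent, as used in the proof of Proposition~\ref{lem:compact}), the identity $[f]\times\eval_i = [f(\alpha_i)]$ gives $[f]\times C_t = \langle [f(\alpha_i)], v_t\rangle$.

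Next I would do the reduction. If $x_{i,j}$ survives then, identifying $j$ with the field element $\alpha_j\in\F_n$ and writing $[j] = \phi(\alpha_j)$, there is some $[f]\in A_n$ with $f(\alpha_i)=\alpha_j$; since $A_n\subseteq A_i$, this $[f]$ satisfies all the stage-$i$ constraints, so $\langle [j], v_t\rangle = b_t$ for every $t\in[\epsilon k]$. Thus ``$x_{i,j}$ survives'' is contained in the event $\bigwedge_{t\in[\epsilon k]}\{\langle [j], v_t\rangle = b_t\}$. Finally I would bound the probability of this last event, conditioned on any fixed $A_{i-1}$ (indeed on the whole state $({\cal M},{\cal B})$ at the end of stage $i-1$ and on the non-compactness of row $i$): all of this conditioning, as well as the choice of the columns $C_1,\dots,C_{\epsilon k}$, is a function of the columns sampled by $R_\epsilon$ alone and not of the bits, so conditioned on all of it the $b_t$ remain independent and uniform over $\F_2$; for fixed $v_t$ and fixed $[j]$ we have $\prob[\langle [j],v_t\rangle = b_t] = \tfrac12$, and these events are independent across $t$, so the conditional probability is exactly $2^{-\epsilon k} = n^{-\epsilon}$, whence $\prob[x_{i,j}\text{ survives}\mid A_{i-1}] \leq n^{-\epsilon}$.

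The proof involves essentially no computation; the step that needs the most care is the bookkeeping of the conditioning. One must check that ``row $i$ is non-compact'' depends only on the columns $R_\epsilon$ samples and not on ${\cal B}$, so that conditioning on it (together with $A_{i-1}$) keeps the stage-$i$ bits $b_1,\dots,b_{\epsilon k}$ uniform and independent, and that it is precisely the inclusion $A_n\subseteq A_i$ — not anything about the stages after $i$ — that makes the reduction valid. A minor point to get right is the identification of a column of $\overline{\eval_i}$ with a vector $v\in\F_2^k$ and the resulting identity $[f]\times C = \langle [f(\alpha_i)], v\rangle$, which is where the Claim relating $[f(\alpha)]$ to $[f]\times\eval_\alpha$ and the linear independence of the columns of $\eval_i$ enter.
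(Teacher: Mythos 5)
Your proposal is correct and takes essentially the same route as the paper: the paper also observes that since row $i$ is non-compact, stage $i$ imposes $\epsilon k$ fresh independent affine constraints on $[f(i)]$, restricting it to a random codimension-$\epsilon k$ affine subspace $U$ of the subspace $V$ it could range over after stage $i-1$, and bounds the survival probability of $x_{i,j}$ by $|U|/|V| = n^{-\epsilon}$. Your bit-by-bit accounting of the uniform offsets $b_t$ is just a more explicit rendering of that same calculation.
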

\begin{proof}
In the random restriction algorithm, since $i$ is a non-compact row, in stage $i$, we picked $\epsilon k$ independent columns of $\overline{\eval_{i}}$. At the end of stage $i-1$, $[f]$ was restricted to vary in some affine subspace $A_{i-1}$. Thus the possible values of $f(i)$ also varied in some affine subspace $V$.
At the end of stage $i$, $[f]$  was restricted to vary in some affine subspace of codimension $\epsilon k$ of $A_{i-1}$. This affine subspace was chosen by restricting the values of $f$ at $i$. Thus $[f(i)]$ was allowed to vary in a random affine subspace of codimension $\epsilon k$ in $V$. Call this subspace $U$. Thus the probability that $x_{i,j}$ survives is at most the probability that $j$ lies in the subspace $U$, which is at most $|U|/|V| = \frac{1}{n^{\epsilon}}$.

\end{proof}

We will now prove that any monomial which has a large support in any row of the variable matrix survives the random restriction procedure with only a very small probability.

\begin{lem}~\label{lem:flat-monomial-die}
Any monomial which has a support larger than $t$ in a row in the variable matrix survives $R_\epsilon$ with probability at most $\frac{1}{n^{\epsilon \log t}}$. 
\end{lem}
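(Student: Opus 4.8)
The plan is to analyze the probability that a fixed monomial $\mu$ with support larger than $t$ in some row $i$ of the variable matrix survives $R_\epsilon$, and show this is at most $1/n^{\epsilon \log t}$. The key point is that $\mu$ restricted to row $i$ involves at least $t+1$ distinct variables $x_{i,j_1}, \ldots, x_{i,j_{t+1}}, \ldots$, so the value $f(i)$ (for the polynomial $f$ determining a surviving monomial) would have to equal each of these indices — but a single monomial of $NW_d$ only involves one variable per row, so actually the relevant statement is subtler: we are asking about a monomial computed at the bottom layer of the circuit, which need not be a monomial of $NW_d$. So $\mu$ can genuinely have many variables in row $i$, and surviving means \emph{all} of these variables survive simultaneously. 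Hence I would bound $\Pr[\mu \text{ survives}]$ by $\Pr[\text{all of } x_{i,j_1}, \ldots, x_{i,j_{t+1}} \text{ survive}]$.

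First I would split into the two cases from Propositions~\ref{lem:compact} and~\ref{lem:noncompact} according to whether row $i$ is compact or non-compact. If row $i$ is compact, then by Proposition~\ref{lem:compact} the value $f(i)$ is a single uniformly random element of $\F_n$, so at most one of the $t+1$ variables in row $i$ can survive, and the survival probability of any fixed one is $1/n$; since more than one variable in a row can't survive when the row is compact (as all surviving monomials share the single determined value $f(i)$)... wait — actually I need to be careful: survival of a \emph{variable} $x_{i,j}$ means $j = f(i)$ is \emph{possible}, i.e. $j$ lies in the allowed affine subspace, so in the compact case exactly one $j$ survives. Thus if $\mu$ has two or more distinct variables in a compact row, it survives with probability $0 \le 1/n^{\epsilon\log t}$. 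If row $i$ is non-compact, I use Proposition~\ref{lem:noncompact}: each variable $x_{i,j}$ survives with probability at most $1/n^\epsilon$, and moreover — this is the crucial subtlety — I need these events to be sufficiently independent or at least to multiply favorably. Here I would invoke Proposition~\ref{lem:random-subspace-inclusion}: surviving variables all correspond to elements $j$ lying in the random affine subspace $U$ of codimension $\epsilon k$ inside the space $V$ of allowed values of $f(i)$. For $\mu$ to survive, all $t+1$ values $j_1, \ldots, j_{t+1}$ must lie in $U$; translating to the linear span, the subspace $W$ spanned (affinely) by these points, of dimension at least $\log(t+1) \ge \log t$ over $\F_2$ once we pick a genuinely $(\log t)$-dimensional independent subset (any $t+1$ distinct vectors contain $\lceil \log(t+1)\rceil \ge \log t$ affinely independent ones), must be contained in $U$. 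By Proposition~\ref{lem:random-subspace-inclusion}, $\Pr[W \subseteq U] \le 2^{-(\dim V - \dim U)\dim W} = 2^{-\epsilon k \cdot \log t} = n^{-\epsilon \log t}$, since $\dim V - \dim U = \epsilon k$ and $n = 2^k$.

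The main obstacle I anticipate is handling the affine (rather than linear) nature of the subspaces and the conditioning correctly: Proposition~\ref{lem:random-subspace-inclusion} is stated for linear subspaces, so I would need to pass from the affine subspace $U \subseteq V$ of allowed values of $[f(i)]$ to a linear statement — e.g. by translating by a fixed point of $U$ and observing that the $t+1$ distinct field elements $j_1, \ldots, j_{t+1}$, after the same translation, span (or affinely span) a subspace of $\F_2$-dimension at least $\log t$, which must sit inside the translated (now linear) random subspace. I also need to make sure the randomness used for row $i$ in the non-compact case is genuinely a uniformly random codimension-$\epsilon k$ subspace of $V$ independent of everything relevant — which is exactly what Proposition~\ref{lem:noncompact} and its proof guarantee, even after conditioning on $A_{i-1}$. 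Once these points are in place, combining the compact and non-compact cases gives the bound $\Pr[\mu \text{ survives } R_\epsilon] \le n^{-\epsilon\log t}$ in all cases, completing the proof.
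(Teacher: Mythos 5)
Your proposal is correct and follows essentially the same route as the paper's proof: handle the compact case by noting at most one variable per compact row survives, and in the non-compact case extract $\log t$ linearly independent elements among the $j$'s, observe that survival forces their span $W$ to lie in the random codimension-$\epsilon k$ subspace $U$ of $V$, and apply Proposition~\ref{lem:random-subspace-inclusion}. Your extra care about the affine-versus-linear issue is a point the paper itself glosses over, but it does not change the argument.
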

\begin{proof}
Let $\alpha$ be a monomial which has support $\geq t$ in row $i$ of the variable matrix.
Let $S = \{x_{i,j_1}, x_{i,j_2}, \ldots, x_{i,j_t}\}$ be any subset of the variables in this support of size $t$. For $t = 1$, the lemma trivially holds. Now,  if $t>1$, then if the row $i$ is compact then this monomial survives with probability $0$. So, now we will assume that row $i$ is non-compact. Since we identified $\F_n$ with $\F_2^k$, $\{j_i, j_2, \ldots, j_t\} \subset \F_2^k$. 
There must be $\log t$ of these elements that are linearly independent. Let this set of independent elements be $\beta_1, \beta_2,\ldots, \beta_{\log t}$. Thus $\alpha$ survives only if for each $j$, there is an $f$ such that $[f] \in A_n$ and $f(i) = \beta_j$. 

Recall that in the random restriction algorithm, in stage $i$, we picked $\epsilon k$ independent columns
of $\overline{\eval_{i}}$. At the end of stage $i-1$, $[f]$ was restricted to vary in some affine subspace $A_{i-1}$. Thus the possible values of $[f(i)]$ also varied in some affine subspace $V$.  If each of $\beta_1, \beta_2,\ldots, \beta_{\log t}$ were not contained in $V$ then $\alpha$ does not survive. Thus let us assume that $\beta_1, \beta_2,\ldots, \beta_{\log t} \in V$.

At the end of stage $i$, $[f]$  was restricted to vary in some affine subspace of codimension $\epsilon k$ of $A_{i-1}$. This affine subspace was chosen by restricting the values of $f$ at $i$. Thus $[f(i)]$ was allowed to vary in a random affine subspace of codimension $\epsilon k$ in $V$. Call this subspace $U$. Let $W$ be the subspace given by the span of $\beta_1, \beta_2,\ldots, \beta_{\log t}$. Then $\beta_1, \beta_2,\ldots, \beta_{\log t} \in U$ if and only if $W \subseteq U$. 
By Lemma~\ref{lem:random-subspace-inclusion}, the probability of this happening is at most $\frac{1}{n^{\epsilon \log t}}$.

\end{proof}

Now, let us consider a monomial which has a large number of variables from different rows. We will now estimate the probability that this monomial survives. 

\begin{lem}~\label{lem:tall-monomial-noncompact-die}
Let $t < d-1$. Any monomial which has support in $t$ non-compact rows survives $R_\epsilon$ with probability at most $\frac{1}{n^{\epsilon t}}$.
\end{lem}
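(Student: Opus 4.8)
The plan is to reduce the statement about $t$ non-compact rows to the single-row estimate from Proposition~\ref{lem:noncompact}, exploiting the independence built into the outer loop of the random restriction algorithm. First I would fix the monomial $\alpha$ and let $i_1 < i_2 < \cdots < i_t$ be the indices of the $t$ non-compact rows in which $\alpha$ has positive support; for each $i_s$, pick one variable $x_{i_s, j_s}$ appearing in $\alpha$. The monomial $\alpha$ survives $R_\epsilon$ only if all of these chosen variables survive, so it suffices to bound the probability that $x_{i_1,j_1}, \ldots, x_{i_t, j_t}$ all survive.

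The key observation is that the events are handled stage by stage: the survival of $x_{i_s, j_s}$ is decided during stage $i_s$ of the outer loop, and Proposition~\ref{lem:noncompact} already tells us that, conditioned on \emph{any} fixed value of $A_{i_s - 1}$ (the affine subspace in which $[f]$ is constrained to lie after the first $i_s - 1$ stages), the probability that $x_{i_s, j_s}$ survives is at most $1/n^{\epsilon}$ — provided row $i_s$ is non-compact. So I would expose the randomness of the algorithm in order of the outer loop index $i$ and multiply conditional probabilities: letting $E_s$ be the event that $x_{i_s,j_s}$ survives, I would write
\[
\prob[\alpha \text{ survives}] \leq \prob[E_1 \wedge \cdots \wedge E_t] = \prod_{s=1}^{t} \prob[E_s \mid E_1, \ldots, E_{s-1}],
\]
and bound each factor by $1/n^\epsilon$ using Proposition~\ref{lem:noncompact}, since conditioning on $E_1, \ldots, E_{s-1}$ (together with the rest of the algorithm's randomness up through stage $i_s - 1$) simply fixes some particular $A_{i_s - 1}$, which is exactly the kind of conditioning Proposition~\ref{lem:noncompact} permits. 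Multiplying $t$ such factors gives $1/n^{\epsilon t}$, as claimed. The hypothesis $t < d - 1$ is presumably there to guarantee that having support in $t$ distinct rows is not automatically impossible (the relevant columns of $\overline{\eval_{i}}$ can fail to be spanned) and to stay inside the regime where the earlier claims apply; I would invoke it where needed to ensure $A_{i_s-1}$ genuinely allows a nontrivial affine subspace $V$ of possible values of $[f(i_s)]$.

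The main obstacle — really the only subtle point — is making the conditioning argument airtight: I need that Proposition~\ref{lem:noncompact} holds conditioned not just on $A_{i_s-1}$ but on the joint event $E_1 \wedge \cdots \wedge E_{s-1}$, and the clean way to see this is to note that $E_1, \ldots, E_{s-1}$ are all determined by the algorithm's choices in stages $1, \ldots, i_s - 1$, hence measurable with respect to $A_{i_s - 1}$ together with earlier coin tosses; averaging the bound of Proposition~\ref{lem:noncompact} (which is uniform over all choices of $A_{i_s-1}$) over the conditional distribution of $A_{i_s-1}$ given $E_1 \wedge \cdots \wedge E_{s-1}$ preserves the bound $1/n^\epsilon$. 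Once this is set up correctly, the rest is just the product bound above, so I do not anticipate any genuine difficulty beyond writing the conditioning carefully.
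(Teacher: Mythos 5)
Your proposal is correct and follows essentially the same route as the paper: the paper's proof also picks one surviving variable per non-compact row and chains the conditional bound of Proposition~\ref{lem:noncompact} stage by stage, exactly as you do. Your write-up simply makes the measurability/conditioning step more explicit than the paper does.
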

\begin{proof}
Let $\alpha$ be a monomial which has at least one variable in each of $t$ distinct non compact rows, say $i_1, i_2, i_3, \ldots, i_t$. From Lemma~\ref{lem:noncompact}, we know that a variable in row $i_j$, $j\in [t]$, survives with probability at most $\frac{1}{n^{\epsilon}}$. In fact, conditioned on the variables in $i_1, i_2, \ldots, i_j$ surviving for any rows $i_1, i_2, \ldots, i_j$,  the probability that the variable in row $i_{j+1}$ survives is at most $\frac{1}{n^{\epsilon}}$. Hence, all of them survive with probability at most $\frac{1}{n^{\epsilon t}}$. 
\end{proof}


We will now show that monomials which have nonzero support in many compact rows survive with very low probability.
\begin{lem}~\label{lem:tall-monomial-compact-die}
Let $t < d-1$. Any monomial which has nonzero support in $t$ compact rows survives $R_\epsilon$ with probability at most $\frac{1}{n^{t}}$.
\end{lem}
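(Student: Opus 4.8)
The plan is to reduce the survival of the monomial to a statement about the \emph{forced} values that $f$ takes at the compact rows, and then to argue that these forced values are jointly uniform. Fix a monomial $\alpha$ with nonzero support in $t$ compact rows $i_1 < i_2 < \cdots < i_t$; note $t < d-1 < d$, so interpolation at the points $i_1,\dots,i_t$ is unconstrained. The first step is the observation that a compact row admits at most one surviving variable: if row $i$ is compact then every column of $\eval_i$ lies in the column span of $\mathcal M$, so $[f]\times\eval_i=[f(i)]$ equals the same value $v_i\in\F_2^k$ for all $[f]\in A_n$ (and $A_n\neq\emptyset$, since the columns of $\mathcal M$ are linearly independent). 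By the definition of $S_0$, the variable $x_{i,j}$ survives exactly when some $[f]\in A_n$ has $f(i)=j$, i.e.\ exactly when $\phi(j)=v_i$; hence the unique survivor in row $i$ is $x_{i,\phi^{-1}(v_i)}$.

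I would then note that if $\alpha$ has two or more variables in some row $i_s$ it cannot survive, while otherwise, writing $x_{i_s,j_s}$ for the single variable of $\alpha$ in row $i_s$, survival of $\alpha$ forces $v_{i_s}=\phi(j_s)$ for every $s\in[t]$; so $\Pr[\alpha\text{ survives}]\le\Pr[\,v_{i_s}=\phi(j_s)\text{ for all }s\,]$. To bound the latter the plan is to condition on the entire sequence of column choices made by $R_\epsilon$ in its inner loops. This fixes $\mathcal M$ (and hence which rows are compact) while leaving $\mathcal B$ uniform on $\F_2^{M}$, where $M$ is the number of chosen columns, because the random bits are drawn independently of the column choices. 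Since each column of $\eval_{i_s}$ is a fixed $\F_2$-linear combination of the columns of $\mathcal M$, and $[f]\cdot(\text{a column of }\mathcal M)$ is the corresponding coordinate of $\mathcal B$, the tuple $(v_{i_1},\dots,v_{i_t})$ is the image of $\mathcal B$ under a fixed $\F_2$-linear map $\Phi\colon\F_2^{M}\to(\F_2^k)^t$.

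The crux is showing $\Phi$ is surjective. Given any target $(u_1,\dots,u_t)\in(\F_2^k)^t$, interpolation supplies a polynomial $g\in\F_n[Z]$ of degree $\le t-1\le d-1$ with $g(i_s)=\phi^{-1}(u_s)$ for all $s$, so $[g]\in\F_2^{kd}$ satisfies $[g(i_s)]=u_s$; taking $\mathcal B=[g]\times\mathcal M$ we get $[g]\in A_n$ for this value of $\mathcal B$, whence $\Phi$ maps it to $(u_1,\dots,u_t)$. A surjective $\F_2$-linear map applied to a uniform input produces a uniform output, so $\Pr[v_{i_s}=\phi(j_s)\ \forall s]=2^{-kt}=n^{-t}$; and since this holds conditioned on every sequence of column choices under which $i_1,\dots,i_t$ are all compact, averaging gives $\Pr[\alpha\text{ survives}]\le n^{-t}$. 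The part I expect to require the most care is precisely this surjectivity — establishing \emph{joint} rather than merely marginal uniformity of the forced evaluations, of which Proposition~\ref{lem:compact} is only the $t=1$ case — together with the bookkeeping that conditioning on the column choices genuinely leaves $\mathcal B$ uniform.
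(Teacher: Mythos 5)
Your proposal is correct and follows essentially the same route as the paper: the paper's proof also rests on the fact that the columns of $\eval_{i_1},\ldots,\eval_{i_t}$ are jointly linearly independent (because a degree-$(d-1)$ polynomial can interpolate arbitrary values at $t<d$ points), deduces from this that survival events across the $t$ compact rows are independent, and multiplies the per-row bound of $1/n$ from Proposition~\ref{lem:compact}. Your write-up just makes the independence claim explicit as joint uniformity of the forced values $(v_{i_1},\ldots,v_{i_t})$ via the surjective linear map on $\mathcal B$, which is a more careful rendering of the same argument.
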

\begin{proof}
Let $i_1, i_2, \ldots, i_t$ be some $t$ distinct compact rows. It is easy to see that the columns of the matrices $\eval_{i_1}, \eval_{i_2}, \ldots, \eval_{i_t}$ are all linearly independent, since $f$ can take all possible values at the points $i_1, i_2, \ldots, i_t$. Therefore, the probability that some variable survives in one of these rows is independent of the probability that some variable in another row survives.  From Lemma~\ref{lem:compact}, we know that any variable in any of these rows survives with probability at most $\frac{1}{n}$. From the above two observations, the probability that any monomial with support in these rows survives is at most $\frac{1}{n^t}$.
\end{proof}

Together, Lemma~\ref{lem:flat-monomial-die}, Lemma~\ref{lem:tall-monomial-noncompact-die} and Lemma~\ref{lem:tall-monomial-compact-die} show that any monomial with large support survives only with a very small probability, which completes the proof of Lemma~\ref{lem:ckt-simplifies}. We formally prove this below.  
\\


\noindent{\it Proof of Lemma~\ref{lem:ckt-simplifies}:}
From Lemma~\ref{lem:flat-monomial-die}, we know that any monomial which  has at least $\frac{\frac{\beta}{100}\log n}{\log\log n}$ variables in any row survives with probability at most $\frac{1}{n^{\epsilon(\log {\frac{\beta}{100}} + 0.9\log\log n)}}$ (for $n$ large enough). Hence, for any circuit of size at most $n^{\rho\log\log n}$, where $\rho < \epsilon/2$, by the union bound, with high probability none of the  monomials which  has at least $\frac{\frac{\beta}{100}\log n}{\log\log n}$ variables in any row survives.  

Similarly, by Lemma~\ref{lem:tall-monomial-noncompact-die}, a monomial with nonzero support in at least ${\log\log n}$ non-compact rows survives with probability at most $\frac{1}{n^{\epsilon \log\log n}}$. Hence, for circuits of size $n^{\rho\log\log n}$, where $\rho < \epsilon/2$, with high probability none of these monomials survive. 

Similarly,  monomials with nonzero support in $\log\log n$ compact rows are eliminated with a very high probability if $\rho < 1/2$. Hence, at the end of any such random restriction process, with  probability very close to $1$, none of the surviving monomials has support larger than $\beta\log n$ if $\rho < \epsilon/2$.   
\qed

\section{Lower Bounds for $NW_d$}~\label{sec:lowerbounds}
In this section, we give a proof of our main theorem. We will heavily borrow from the proof of Theorem~\ref{thm:bounded-support-lower-bound} in Section~\ref{sec:small-support-lb}.  The following lemma provides a lower bound on the complexity of the $NW_d$ polynomial after restricting it via $R_{\epsilon}$. 

\begin{lem}~\label{lem:manyshifts-restricted-NW}
Let $\delta$ and $\epsilon$ be any constants such that $0<\epsilon, \delta < 1$. Let $d = \delta n$. Let $\ell, m, r$ be positive integers such that $n-r > d$, $r < d-1$, $m \leq N$, $m = \theta(N)$ and  for $\phi =  {\frac{N}{m}}$,  $r$ satisfies $ r \leq \frac{ (n-d)\log{ \phi} \pm O\left(\phi \frac{(n-d-r)^2}{N}\right)}{(1-\epsilon n/d)\log n + \log{ \phi}}$. Then, for every random restriction $R_{\epsilon}$, $$\dim(\langle\partial^{r} R_{\epsilon}(NW_d)\rangle_{(\ell, m)}) \geq 0.5n^{(1-\epsilon n/d)r}{N \choose m}{\ell-1 \choose m-1}$$
\end{lem}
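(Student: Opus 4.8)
The plan is to mimic the proof of Lemma~\ref{lem:manyshiftsNW} almost verbatim, but using the set of monomials $T_{S_i}$ supplied by Lemma~\ref{lem:many-derivatives} in place of ${\cal M}^{[r]}$. Recall that Lemma~\ref{lem:many-derivatives} gives us, for any random restriction $R_\epsilon$, a set $S\subset[n]$ of size $r$ and a collection of at least $n^{r(1-\epsilon n/d)}$ monomials in $T_S$ whose partial derivatives of $R_\epsilon(NW_d)$ are nonzero and pairwise distinct. So first I would fix this set $S$ and this sub-collection ${\cal T}$ of monomials with $|{\cal T}|\ge n^{r(1-\epsilon n/d)}$.

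The key structural input is Remark~\ref{rmk:bounded-sup-lb}: the lower bound argument of Lemma~\ref{lem:manyshiftsNW} only uses that the leading monomials of distinct partial derivatives are far apart (distance at least $n-r-d$ in the multilinear-distance sense). I need the analogue here. For $\alpha\in{\cal T}$, the partial derivative $\partial_\alpha(R_\epsilon(NW_d))$ is obtained from $\partial_\alpha(NW_d)$ by zeroing out a subset of variables; crucially every monomial appearing in $\partial_\alpha(R_\epsilon(NW_d))$ already appears in $\partial_\alpha(NW_d)$, so by the argument in Lemma~\ref{lem:many-partial-derivatives} (and Remark~\ref{rmk2}) the leading monomials of $\partial_\alpha(R_\epsilon(NW_d))$ and $\partial_\beta(R_\epsilon(NW_d))$ for distinct $\alpha,\beta\in{\cal T}$ come from distinct degree-$\le d-1$ polynomials $f,g$, hence are multilinear monomials of degree $n-r$ at distance at least $\Delta=n-r-d$ from one another. (I should double-check that the leading monomial of $\partial_\alpha(R_\epsilon(NW_d))$ really is $\partial_\alpha(m_f)$ for some surviving $f$ — this follows since by construction each $\alpha\in{\cal T}$ is consistent with at least one surviving $f$, so the derivative is nonzero, and its leading monomial is the largest among $\partial_\alpha(m_f)$ over surviving extensions $f$.) With this in hand, Lemma~\ref{clm:manyshifts1} applies exactly as before to bound $|S_{\ell,m}(\alpha)\cap S_{\ell,m}(\beta)|\le\binom{N-\Delta}{m-\Delta}\binom{\ell-1}{m-1}$, i.e. the conclusion of Lemma~\ref{lem:manyshifts2} holds for pairs in ${\cal T}$.

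Then I would run the inclusion–exclusion computation of Lemma~\ref{lem:manyshiftsNW} with $|{\cal M}^{[r]}|$ replaced by $|{\cal T}|$: the lower bound on $\dim(\langle\partial^r R_\epsilon(NW_d)\rangle_{(\ell,m)})$ is at least $|{\cal T}|\binom{N}{m}\binom{\ell-1}{m-1}-\binom{|{\cal T}|}{2}\binom{N-\Delta}{m-\Delta}\binom{\ell-1}{m-1}$, which is at least $0.5\,|{\cal T}|\binom{N}{m}\binom{\ell-1}{m-1}$ provided $|{\cal T}|\binom{N-\Delta}{m-\Delta}/\binom{N}{m}\le 1$. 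Redoing the approximation via Lemma~\ref{lem:approx} with $\log|{\cal T}|\le r(1-\epsilon n/d)\log n$ instead of $r\log n$ yields exactly the modified constraint $r\le\frac{(n-d)\log\phi\pm O(\phi(n-d-r)^2/N)}{(1-\epsilon n/d)\log n+\log\phi}$ stated in the hypothesis. Since $|{\cal T}|\ge n^{r(1-\epsilon n/d)}$, this gives $\dim(\langle\partial^r R_\epsilon(NW_d)\rangle_{(\ell,m)})\ge 0.5\,n^{(1-\epsilon n/d)r}\binom{N}{m}\binom{\ell-1}{m-1}$, as desired.

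The main obstacle I anticipate is purely a bookkeeping one: making sure the ``leading monomials are far apart'' property genuinely survives the restriction, i.e. that for each $\alpha\in{\cal T}$ the leading monomial of $\partial_\alpha(R_\epsilon(NW_d))$ still has full degree $n-r$ and corresponds to a genuine surviving polynomial $f$ of degree $\le d-1$, and that two such leading monomials for $\alpha\ne\beta$ cannot coincide (they arise from distinct $f$ and $g$, so as in Lemma~\ref{lem:many-partial-derivatives} they disagree in at least $n-r-d$ coordinates of their support since two distinct degree-$\le d-1$ polynomials agree at $\le d-1$ points). Everything else is a word-for-word transcription of the unrestricted argument with the single numerical change $|{\cal M}^{[r]}|\rightsquigarrow|{\cal T}|$, so I do not expect any genuine difficulty beyond carefully invoking Lemma~\ref{lem:many-derivatives} and Remark~\ref{rmk:bounded-sup-lb}.
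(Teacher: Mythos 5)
Your proposal is correct and follows essentially the same route as the paper: the paper's proof likewise reruns the argument of Lemma~\ref{lem:manyshiftsNW} verbatim, substituting $n^{r(1-\epsilon n/d)}$ (from Lemma~\ref{lem:many-derivatives}) for $|{\cal M}^{[r]}| = n^r$, which produces exactly the modified constraint on $r$. Your extra care about the leading monomials of the restricted derivatives still being far apart is a point the paper leaves implicit, and your treatment of it is sound (just note that to run the inclusion--exclusion you should truncate ${\cal T}$ to size exactly $\lceil n^{r(1-\epsilon n/d)}\rceil$, since the half-term condition needs an upper bound on $|{\cal T}|$).
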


\begin{proof}
The proof is analogous to the proof of Lemma~\ref{lem:manyshiftsNW} till the point we substitute the value of ${\cal M}^{[r]}$ in the calculations in the proof of Lemma~\ref{lem:manyshiftsNW}.  For $R_{\epsilon}(NW_d)$, the value to be substituted is now $n^{r(1-\epsilon n/d)}$ as shown in Lemma~\ref{lem:many-derivatives}. So, we know that 

$$\dim(\langle\partial^{r} R_{\epsilon}(NW_d)\rangle_{(\ell, m)}) \geq 0.5n^{(1-\epsilon n/d)r}{N \choose m}{\ell-1 \choose m-1}$$ 

as long the parameters satisfy 

\begin{equation}~\label{eqn1} 
 n^{r(1-\epsilon n/d)}\times \frac{(N-(n-d-r))!}{N!}\times \frac{m!}{(m-(n-d-r))!} \leq 1
\end{equation}  

Now, using the  approximation from Lemma~\ref{lem:approx},
\begin{eqnarray*}
\log\frac{N!}{(N-(n-d-r))!} &=& (n-d-r)\log N \pm O\left(\frac{(n-d-r)^2}{N}\right) \text{and} \\
\log\frac{m!}{(m-(n-d-r))!} &=& (n-d-r)\log m \pm O\left(\frac{(n-d-r)^2}{m}\right)
\end{eqnarray*}

Now, taking logarithms on both sides in Equation~\ref{eqn1} and substituting these approximations, we get 
$$(1-\epsilon n/d)r\log n \leq \log\left(\frac{N}{m}\right)^{n-d-r} \pm  O\left(\frac{(n-d-r)^2}{N} + \frac{(n-d-r)^2}{m}\right)$$ 

Substituting $m = \frac{N}{\phi}$ and noting that $\phi > 1$,  we require 
 $$(1-\epsilon n/d)r\log n \leq (n-d-r)\log{ \frac{N}{m}} \pm  O\left(\phi\frac{(n-d-r)^2}{N}\right)  $$
 and  
 $$r \leq \frac{ (n-d)\log{ \phi} \pm  O(\phi\frac{(n-d-r)^2}{N})}{(1-\epsilon n/d)\log n + \log{ \phi}}$$
 
Observe that for any constant $0< \delta < 1$ such that $d = \delta n$, $r$ can be chosen any constant times $\frac{n}{\log n}$ by choosing $\phi$ to be an appropriately large constant. 
So, for such a choice of $r$,  we get $$\dim(\langle\partial^{r} NW_d\rangle_{(\ell, m)}) \geq 0.5n^{(1-\epsilon n/d)r}{N \choose m}{\ell-1 \choose m-1}$$
\end{proof}

The following lemma proves a lower bound on the top fan-in of any homogeneous $\spsp^{\{\beta\log n\}}$ circuit for the $R_{\epsilon}(NW_d)$ polynomial for a constant $\beta$. The proof of the lemma is essentially the same as the proof of Theorem~\ref{thm:bounded-support-lower-bound}. 

\begin{lem}~\label{lem:bounded-support-lb-restricedNW}
Let $d = \delta n$ for any constant $\delta$ such that $0 < \delta < 1$. Then, there exist constants $\epsilon, \beta$ such that  any homogeneous $\spsp^{\{\beta\log n\}}$ circuit computing the $R_{\epsilon}(NW_d)$ polynomial for any random restriction $R_{\epsilon}$ has top fan-in is at least $2^{\Omega(n)}$.
\end{lem}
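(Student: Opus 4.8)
The plan is to combine the circuit upper bound from Corollary~\ref{cor:circuit-complexity-bound} with the restricted-polynomial lower bound just established in Lemma~\ref{lem:manyshifts-restricted-NW}, exactly as Theorem~\ref{thm:bounded-support-lower-bound} combined Corollary~\ref{cor:circuit-complexity-bound} with Lemma~\ref{lem:manyshiftsNW}. First I would observe that the upper bound of Corollary~\ref{cor:circuit-complexity-bound} is a statement about \emph{any} homogeneous $\spsp^{\{s\}}$ circuit computing \emph{any} homogeneous degree-$n$ polynomial in $N$ variables, so it applies verbatim to a circuit computing $R_\epsilon(NW_d)$ (which is still homogeneous of degree $n$ in $N = n^2$ variables, since the random restriction only kills monomials and preserves homogeneity). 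Thus if the top fan-in is $T$, then for all admissible $m, r, s, \ell$,
\[
\dim(\langle\partial^{r} R_\epsilon(NW_d)\rangle_{(\ell, m)}) \leq T\cdot \text{poly}(nrs){n+r \choose r}{N \choose m+rs}{\ell+n-r-1 \choose m+rs-1}.
\]
Dividing the lower bound of Lemma~\ref{lem:manyshifts-restricted-NW} by this gives
\[
T \geq \frac{0.5\, n^{(1-\epsilon n/d)r}{N\choose m}{\ell-1 \choose m-1}}{\text{poly}(nrs){n+r \choose r}{N \choose m+rs}{\ell+n-r \choose m+rs}},
\]
which is the analogue of Expression~\ref{eqn:top-fan-in-1} with $n^r$ replaced by $n^{(1-\epsilon n/d)r}$.

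Next I would carry out the same parameter choices and approximations as in the proof of Theorem~\ref{thm:bounded-support-lower-bound}: set $\ell = \theta(N)$, $m = \eta\ell$ for a small constant $\eta$, $s = \beta\log n$, and $r = \alpha \cdot n/\log n$. The binomial-ratio simplifications there did not use the specific polynomial at all — they show the ratio is, up to polynomial factors, at least
\[
\Bigl(\tfrac{r}{3}\Bigr)^{r(1-\epsilon n/d)\cdot(\text{correction})}\cdots
\]
more precisely the numerator contributes a factor $n^{(1-\epsilon n/d)r}$ instead of $n^r$ while the denominator factor ${n+r\choose r}$ is unchanged, so the net "leading" term becomes $2^{\Omega((1-\epsilon n/d)r\log n - r\log(n/r))}$. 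Since $d = \delta n$, we have $\epsilon n/d = \epsilon/\delta$, a constant strictly less than $1$ provided we choose $\epsilon < \delta$; then $(1-\epsilon/\delta)$ is a fixed positive constant, and $(1-\epsilon/\delta)r\log n = \Theta(n)$ while $r\log(n/r) = \Theta(r\log\log n) = o(n)$. The remaining error terms $rs\log\frac{1-\eta}{\eta}$, $rs\log\frac{N-m}{m}$, and $(n-r)\log\frac{\ell}{\ell-m}$ are controlled exactly as before by first fixing $\eta$ small, then $\alpha$ large (using the fifth constraint, which in the restricted setting reads $r \leq \frac{(n-d)\log\phi \pm O(\phi(n-d-r)^2/N)}{(1-\epsilon n/d)\log n + \log\phi}$ and still permits $r = \Theta(n/\log n)$ once $\phi$ is a large enough constant), and finally $\beta$ small. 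This yields $T \geq 2^{\Omega(r\log r)} = 2^{\Omega(n)}$.

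The only genuinely new point to check — and the step I expect to require the most care — is that the fifth constraint and the asymptotic "$r = \Theta(n/\log n)$ is achievable" claim survive the replacement of $\log n$ by $(1-\epsilon n/d)\log n$ in the denominator. Since $1 - \epsilon/\delta$ is a positive constant bounded away from $0$ (this is precisely why we need $\epsilon < \delta$, which we are free to impose since both are constants we pick), the denominator $(1-\epsilon/\delta)\log n + \log\phi$ is still $\Theta(\log n)$ for fixed $\phi$, and the numerator $(n-d)\log\phi = (1-\delta)n\log\phi$ is $\Theta(n\log\phi)$; so $r$ can indeed be taken to be $\Theta(n\log\phi/\log n) = \Theta(n/\log n)$ with the constant growing with $\phi$, and in particular exceeding any prescribed $\alpha$. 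Everything else is a verbatim repetition of the arithmetic in Theorem~\ref{thm:bounded-support-lower-bound}, and I would state it as such rather than redoing the binomial manipulations. The conclusion is that there exist constants $\epsilon, \beta > 0$ (with $\epsilon < \delta$) such that any homogeneous $\spsp^{\{\beta\log n\}}$ circuit computing $R_\epsilon(NW_d)$, for every outcome $R_\epsilon$ of the random restriction, has top fan-in at least $2^{\Omega(n)}$.
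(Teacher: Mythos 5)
Your proposal is correct and follows essentially the same route as the paper: it combines Corollary~\ref{cor:circuit-complexity-bound} (which indeed applies to any homogeneous degree-$n$ polynomial, hence to $R_\epsilon(NW_d)$) with Lemma~\ref{lem:manyshifts-restricted-NW}, redoes the binomial approximations of Theorem~\ref{thm:bounded-support-lower-bound}, and absorbs the new loss factor $n^{(\epsilon n/d)r}$ by taking $\epsilon/\delta$ a small constant bounded away from $1$ — exactly the paper's choice of $\epsilon$ with $\epsilon/\delta \ll 0.1$. Your check that the modified fifth constraint still permits $r = \Theta(n/\log n)$ matches the paper's remark to the same effect.
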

\begin{proof}
By comparing the complexities of the circuit and the polynomial as given by Corollary~\ref{cor:circuit-complexity-bound} and Lemma~\ref{lem:manyshiftsNW}, the top fan-in of the circuit must be at least 

\begin{equation*}~\label{eqn:top-fan-in-2}
\frac{0.5n^{(1-\epsilon n/d)r}{N\choose m}{\ell-1 \choose m-1}}{\text{poly}(nrs){n+r \choose r}{N \choose m+rs}{\ell+n-r \choose m+rs}}
\end{equation*}

This bound holds for any choice of positive integers $\ell, m, r$, a constant $\beta$ such that $s = \beta\log n$ which satisfy the constraints in the hypothesis of Corollary~\ref{cor:circuit-complexity-bound} and Lemma~\ref{lem:manyshifts-restricted-NW}. In other words, we want these parameters to satisfy 

\begin{itemize}
\item $m+rs \leq \frac{N}{2}$ 
\item $m+rs \leq \frac{\ell}{2}$
\item $n-r > d$
\item $r < d-1$
\item For $\phi = {\frac{N}{m}}$,  $ r \leq \frac{ (n-d)\log{ \phi} \pm O\left(\phi\frac{(n-d-r)^2}{N}\right)}{(1-\epsilon n/d)\log n + \log{ \phi}}$
\end{itemize}

In the rest of the proof, we will show that there exists a choice of these parameters such that we get a bound of $2^{\Omega(n)}$ from expression above. 
We will show the existence of such parameters satisfying the asymptotics $\ell = \theta(N)$, $r = \theta\left( \frac{n}{\log n} \right)$ and $s = \theta(\log n)$. In the rest of the proof, we will crucially use these asymptotic bounds for various approximations.



Let us now estimate this ratio term by term. We will invoke Lemma~\ref{lem:approx} for approximations. 

\begin{itemize}
\item $\frac{{N \choose m}}{{N \choose m+rs}} = \frac{(N-m-rs)!(m+rs)!}{(N-m)!m!} = (\frac{m}{N-m})^{rs}$ upto some constant factors, as long as $(rs)^2 = \theta (N) = \theta(m)$. 
\item $\frac{{\ell-1 \choose m-1}}{{\ell+n-r \choose m+rs}} = {\frac{(\ell-1)!}{(m-1)!(\ell-m)!} \times \frac{(m+rs)!(\ell-m+n-r-rs)!}{(\ell+n-r)!}}$. Lets now pair up things we know how to approximate within constant factors. $\frac{{\ell-1 \choose m-1}}{{\ell+n-r \choose m+rs}} = \frac{(\ell-1)!}{(\ell+n-r)} \times \frac{(m+rs)!}{(m-1)!} \times \frac{(\ell-m+n-r-rs)!}{(\ell-m)!} = \text{poly(n)} \times {\frac{1}{\ell^{n-r}}}\times m^{rs} \times {\frac{(\ell-m)^{n-r}}{(\ell-m)^{rs}}}$. This simplifies to $\text{poly(n)} \times {\left(\frac{m}{\ell-m}\right)}^{rs} \times {\left(\frac{\ell-m}{\ell}\right)}^{n-r}$.
\item $\frac{n^{(1-\epsilon n/d)r}}{{n+r \choose r}} \geq \frac{n^{(1-\epsilon n/d)r}}{{\left(\frac{2(n+r)}{r}\right)}^{r}}$. We just used Stirling's approximation here. 
\end{itemize}


In the asymptotic range of our parameters, the approximations above imply that the top fan-in, up to polynomial factors is at least 
$${\left(\frac{r}{3}\right)}^r\times{\left(\frac{m}{\ell-m}\right)}^{rs} \times {\left(\frac{\ell-m}{\ell}\right)}^{n-r} \times \frac{1}{n^{(\epsilon n/d)r}} \times \left(\frac{m}{N-m}\right)^{rs}$$

Simplifying further, this is at least

$$2^{\Omega(r\log r - rs\log\frac{\ell-m}{m} - (n-r)\log\frac{\ell}{\ell-m} - (\epsilon n/d)r\log n - rs\log\frac{N-m}{m})}$$



We will set $m$ and $\ell$ to be $\theta(N)$ and $r$ to be $\theta(\frac{n}{\log n})$. The constants have to be chosen carefully in order to satisfy the constraints. We will choose constants $\alpha, \beta$ and $\eta$ such that $s = \beta\log n$, $r = \alpha\cdot n/\log n$ and $m= \eta \ell$. First let us choose $\epsilon$ to be a very small positive constant such that $\epsilon n/d = \epsilon/\delta << 0.1$ First choose $\eta$ to be any small constant $> 0$ (for instance $\eta= 1/4$). 
Now,  choose $\alpha$ to be a constant much much larger than $\log\frac{1}{1-\eta}$ and $\epsilon/\delta$. This makes sure that $r\log r$ dominates $(n-r)\log\frac{\ell}{\ell-m}$ and $(\epsilon n/d)r\log n$. Recall that $\alpha$ can be chosen to be any large constant by choosing $\phi$ to be appropriately large constant (by the constraint between $r$ and $\phi$ in the fifth bullet). Notice that this sets $m$ to be a small constant factor of $N$. Fix these choices of $\eta$ and $\alpha$. Now, we choose the term $\beta$ to be a small constant such that $rs\log\frac{1-\eta}{\eta}$ and $rs\log\frac{N-m}{m}$is much less than $r\log r$. Observe that this choice of parameters satisfies all the constraints imposed in the calculations above. Hence, the top fan-in must be at least $2^{\Omega(r\log r)} = 2^{\Omega(n)}$.  

\end{proof}

We now have all the ingredients to prove our main theorem. 
\begin{thm}~\label{thm:main3}
Let  $d = \delta n$ for any constant $\delta$ such that $0 < \delta < 1$. Any homogeneous $\spsp$ circuit computing the $NW_d$ must have size at least $n^{\Omega{(\log\log n)}}$.
\end{thm}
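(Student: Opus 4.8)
The plan is to combine the two halves of the paper via a standard averaging/union-bound argument over the random restriction. Suppose for contradiction that $NW_d$ (for $d = \delta n$) has a homogeneous $\spsp$ circuit $C$ of size at most $n^{\rho\log\log n}$, where $\rho$ is the small constant supplied by Lemma~\ref{lem:ckt-simplifies} (chosen so that $\rho < \epsilon/2$ for the constant $\epsilon$ coming from Lemma~\ref{lem:bounded-support-lb-restricedNW}). First I would apply the random restriction procedure $R_\epsilon$ from Section~\ref{sec:rralgorithm} simultaneously to the circuit $C$ and to the polynomial $NW_d$.

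Next I would invoke Lemma~\ref{lem:ckt-simplifies}: since $C$ has size at most $n^{\rho\log\log n}$ with $\rho$ small enough, with probability at least $9/10$ over the choice of $R_\epsilon$, every monomial computed at the bottom layer of $R_\epsilon(C)$ that has support at least $\beta\log n$ has been killed. In other words, with probability at least $9/10$, $R_\epsilon(C)$ is a homogeneous $\spsp^{\{\beta\log n\}}$ circuit (the top fan-in of $R_\epsilon(C)$ is at most the top fan-in of $C$, which is at most its size $n^{\rho\log\log n}$, and homogeneity is preserved by setting variables to zero). Fix any restriction $R_\epsilon$ in this good event. Then $R_\epsilon(C)$ is a homogeneous $\spsp^{\{\beta\log n\}}$ circuit computing $R_\epsilon(NW_d)$.

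Now I would apply Lemma~\ref{lem:bounded-support-lb-restricedNW}, which says precisely that any homogeneous $\spsp^{\{\beta\log n\}}$ circuit computing $R_\epsilon(NW_d)$ (for \emph{any} restriction output by $R_\epsilon$, in particular the one we fixed) must have top fan-in at least $2^{\Omega(n)}$. Here one has to be mildly careful about the order of quantifiers: Lemma~\ref{lem:bounded-support-lb-restricedNW} and Lemma~\ref{lem:ckt-simplifies} each fix constants $\epsilon, \beta, \rho$, so I would first pick $\epsilon$ and $\beta$ as in Lemma~\ref{lem:bounded-support-lb-restricedNW}, then feed these into Lemma~\ref{lem:ckt-simplifies} to obtain $\rho$; the existence of such a mutually consistent triple is exactly what those two lemmas guarantee. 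Combining the two bounds, $n^{\rho\log\log n} \geq \text{(size of } C) \geq \text{(top fan-in of } R_\epsilon(C)) \geq 2^{\Omega(n)}$, which is a contradiction for $n$ large enough since $n^{\rho\log\log n} = 2^{\rho(\log n)(\log\log n)} = 2^{o(n)}$. Hence no homogeneous $\spsp$ circuit of size $n^{\rho\log\log n}$ can compute $NW_d$, i.e. any such circuit has size at least $n^{\Omega(\log\log n)}$.

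The argument is essentially a bookkeeping assembly of results already proved, so there is no serious mathematical obstacle left at this stage; the only thing requiring care is making sure the chain of constant dependencies ($\epsilon$ chosen first, then $\beta$, then $\rho = \rho(\epsilon,\beta)$) is consistent and that the probability estimate from Lemma~\ref{lem:ckt-simplifies} ($> 9/10 > 0$) indeed yields a \emph{single} restriction that is simultaneously good for the circuit while Lemma~\ref{lem:many-derivatives} (hence Lemma~\ref{lem:bounded-support-lb-restricedNW}) applies to \emph{every} restriction, so there is no event-intersection subtlety. Finally I would note, as the theorem statement and Theorem~\ref{thm:main} require, that $NW_d$ is an explicit $\VNP$ family of degree $n$ in $N = n^2$ variables, so this yields Theorem~\ref{thm:main}.
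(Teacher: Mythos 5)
Your proposal is correct and follows essentially the same route as the paper's own proof: fix $\epsilon,\beta$ from Lemma~\ref{lem:bounded-support-lb-restricedNW}, then $\rho$ from Lemma~\ref{lem:ckt-simplifies}, apply $R_\epsilon$ to a hypothetical small circuit, and derive the contradiction $n^{\rho\log\log n}\geq 2^{\Omega(n)}$. Your explicit attention to the order of constant dependencies and to the fact that Lemma~\ref{lem:bounded-support-lb-restricedNW} holds for \emph{every} restriction (so a single good event suffices) matches the paper's argument exactly.
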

\begin{proof}
For every value of $\delta$, such that $0 < \delta < 1$, choose the parameters $\epsilon = \tilde{\epsilon}, \beta = \tilde{\beta}$ such that Lemma~\ref{lem:bounded-support-lb-restricedNW} is true for $\tilde{d} = \delta n$. Now, let us choose a constant $\rho = \tilde{\rho}$ such that Lemma~\ref{lem:ckt-simplifies} holds. Now, let $C$ be a homogeneous $\spsp$ circuit computing  the $NW_{\tilde{d}}$ polynomial. If the number of bottom product gates of $C$ was at least $n^{\tilde{\rho}{\log\log n}}$, then $C$ has large size and we are done.  Else, let us now apply a random restriction $R_{\epsilon}$ to the circuit. By the choice of parameters, Lemma~\ref{lem:ckt-simplifies} holds and so with probability $0.9$ every bottom product gate in $C$ with support larger than $\tilde{\beta}\log n$ is set to zero. After a restriction, the circuit computes $R_{\tilde{\epsilon}}(NW_{\tilde{d}})$. So, now we are in the case when we have a small support homogeneous circuit of depth four computing some random restriction of the $NW_{\tilde{d}}$ polynomial and then, by Lemma~\ref{lem:bounded-support-lb-restricedNW} above, the top fan-in of $R_{\tilde{\epsilon}}(C)$ must be at least $2^{\Omega(n)}$. Hence, any homogeneous $\spsp$ circuit computing $NW_{\tilde{d}}$ must have size at least $n^{\Omega(\log\log n)}$.
\end{proof}


\section{Open Problems}~\label{sec:conclusion}
The main question left open by this work is to prove much stronger, possibly exponential lower bounds for homogeneous $\spsp$ circuits. Given the earlier related works and the results of this paper, this question might be well within reach. It would be also very interesting to understand the limits of the new complexity measure of bounded support shifted partial derivatives that is introduced in this paper (as well as other variants) and investigate if they can be used to prove lower bounds for other interesting classes of circuits.

\section*{Acknowledgments}
We would like to thank Mike Saks and Avi Wigderson for many helpful discussions and much encouragement. We are also thankful to Amey Bhangale, Ben Lund and Nitin Saurabh for carefully sitting through a presentation on an earlier draft of the proof.

\bibliography{refs}

\end{document}